  \newtheorem{theorem}{Theorem}[section]
  \newtheorem{lemma}[theorem]{Lemma}
  \newtheorem{proposition}[theorem]{Proposition}
  \newtheorem{claim}{Claim}
\newcommand{\pmc}{potential maximal clique}
\newcommand{\cO}{\mathcal{O}}
\newcommand{\sm}{\setminus}
\newtheorem{Rule}{Rule}
\title{Subexponential Parameterized Algorithm for Minimum Fill-in
\thanks{This research was partially supported by the Research Council of Norway.}
}
\author{Fedor V. Fomin
\thanks{Department of Informatics, University of Bergen,  Bergen, Norway,  \{fedor.fomin,yngve.villanger\}@ii.uib.no} \and  Yngve Villanger~$^\dagger$
}
\begin{document}

\date{}
\maketitle

\begin{abstract}
The \textsc{Minimum Fill-in} problem is to decide if a graph can be triangulated by adding at most $k$ edges.
Kaplan, Shamir, and Tarjan [FOCS 1994] have shown that the problem is solvable  in time $\cO(2^{\cO({k})} +k^2 nm)$ on graphs with $n$ vertices and $m$ edges and thus is 
fixed parameter tractable. 
Here, we give
  the first  subexponential parameterized algorithm solving  \textsc{Minimum Fill-in}   in time $\cO(2^{\cO(\sqrt{k}\log{k})} +k^2 nm)$. 
This 
substantially lower the complexity of the problem.
  Techniques developed for   \textsc{Minimum Fill-in}  can be used to obtain subexponential parameterized algorithms for several  related problems including \textsc{Minimum Chain Completion}, 
  \textsc{Chordal Graph Sandwich}, and   \textsc{Triangulating Colored Graph}. 
 \end{abstract}

\section{Introduction}
A graph is \emph{chordal} (or triangulated) if every cycle of length at least four contains a chord, i.e. an edge between nonadjacent vertices of the  cycle. The \textsc{Minimum Fill-in}  problem (also known as  \textsc{Minimum Triangulation} and  \textsc{Chordal Graph Completion}) is  

\begin{center}
\fbox{\begin{minipage}{13cm}
\noindent  \textsc{Minimum Fill-in}\\
{\sl Input:} A graph $G =(V,E) $ and a non-negative integer $k$.\\
{\sl Question:} Is there $F\subseteq [V]^{2}$, $|F|\leq k$, such that  graph $H=(V,E\cup F)$ is chordal? 
 
\end{minipage}}
\end{center}

 The name fill-in is due to  the fundamental problem arising in sparse matrix computations which was studied intensively in the past.    
 During   Gaussian eliminations of
    large sparse matrices  new non-zero elements called fill can replace original zeros thus increasing storage requirements and running time needed to solve the system. The problem of finding the right elimination ordering minimizing the amount of fill elements can be expressed as the \textsc{Minimum Fill-in} problem on graphs  \cite{Parter61,Rose72}.   See also \cite[Chapter 7]{DavisT-Direct} for a more recent overview of   related problems and techniques. 
  Besides sparse matrix computations,   applications of  \textsc{Minimum Fill-in} can be found in  
 database management 
 \cite{BeeriFMY83}, artificial intelligence, and the theory of Bayesian statistics 
 \cite{ChungM94,Geman:1990dp,Lauritzen:1990,WongWB02}. The survey of Heggernes    \cite{Heggernes06} gives an overview of techniques and applications of  minimum and minimal triangulations.

\textsc{Minimum Fill-in}  (under the name \textsc{Chordal Graph Completion}) was one of the 12 open  problems presented at the end of the first edition of Garey and Johnson's book  \cite{GareyJ79} and  it was  proved to be NP-complete by Yannakakis 
\cite{Yannakakis81}. Kaplan et al.  proved  that \textsc{Minimum Fill-in}  is  fixed parameter tractable by giving an algorithm of running time  $\cO(m16^k)$ in \cite{focs/KaplanST94}  and  improved the running time to  $\cO (k^6 16^k+k^2mn)$  in
\cite{KaplanST99}, where $m$ is the number of edges and $n$ is the number of vertices of the input graph. There were several algorithmic improvements resulting in decreasing the constant in the base of  the exponents. 
In 1996, Cai \cite{Cai96}, reduced the running time to
 $\cO((n+m) \frac{4^k}{k+1})$. The fastest parameterized algorithm known prior to our work is the recent  algorithm of Bodlaender et al.  with running time  
$\cO(2.36^k +k^2mn)$ \cite{Bodlaender:2011lr}.

 In this paper we give the first subexponential parameterized algorithm for \textsc{Minimum Fill-in}.
 The last chapter of Flum and Grohe's book 
 \cite[Chapter~16]{FlumGrohebook}
concerns  subexponential fixed parameter 
tractability,  the complexity class  SUBEPT,   which, loosely speaking---we skip here some technical conditions---is  the class of problems solvable in time   $2^{o(k)}n^{O(1)}$, where $n$ is the input length and $k$ is the parameter. 
Subexponential fixed parameter tractability is intimately linked with the theory of exact exponential algorithms for hard problems, which   are better than the trivial exhaustive search, though still exponential \cite{Fomin:2010mo}. 
  Based on the fundamental results of  Impagliazzo et al. 
 \cite{ImpagliazzoPZ01}, Flum and Grohe established that most of the natural parameterized problems are not in  SUBEPT unless Exponential Time Hypothesis (ETH) fails.   Until  recently, the only notable exceptions of problems in SUBEPT were  the  problems on planar graphs, and more generally, on graphs excluding some fixed graph as a minor
   \cite{DemaineFHT05jacm}. In 2009, Alon  et al. \cite{AlonLS09} used a novel application of color coding to show that parameterized \textsc{Feedback Arc Set in Tournaments} is in SUPEPT.   
 \textsc{Minimum Fill-in} is the first problem on general graphs which appeared to be in SUBEPT.

\medskip\noindent
  \textbf{General overview of our approach.}
Our main tool in obtaining  subexponential algorithms is the theory of minimal triangulations and potential maximal cliques of  Bouchitt\'e and Todinca  \cite{BouchitteT01}. This theory was developed in contest of computing the treewidth of special graph classes and was used later   in  exact exponential  algorithms 
\cite{FominKTV08,FominV10}.  A set of vertices $\Omega$ of a graph $G$ is a potential maximal clique if there is a minimal triangulation such that $\Omega$ is a maximal clique  in this triangulation. Let $\Pi$ be the set of all potential maximal cliques in graph $G$. The importance of potential maximal cliques is that if we are given the set $\Pi$, 
 then by using the machinery from  \cite{BouchitteT01,FominKTV08}, it is possible   compute an optimum triangulation in time up to polynomial factor proportional to $|\Pi|$.   
 
 Let $G$ be an $n$-vertex  graph  and  $k$ be the parameter. If $(G,k)$ is a YES instance of the 
   \textsc{Minimum Fill-in}  problem, then every maximal clique of every optimum triangulation   is obtained from some potential maximal clique of $G$ by adding  at most $k$ fill edges. We call such potential maximal clique   \emph{vital}.  To give a general  overview of our algorithm, we start with the approach that   does not work directly, and then explain what has to be  changed to suceed. 
The algorithm consists of three main steps. 

\begin{itemize}
\item[Step A.] Apply a kernelization algorithm that in time $n^{O(1)}$  reduces the problem an instance to instance of  size polynomial in  $k$;
\item[Step B.]  Enumerate all vital potential maximal cliques of an $n$-vertex graph in time $n^{o(k/\log{k})}$. By Step~A, $n=k^{O(1)}$, and thus the running time of enumeration algorithm and the number of vital potential maximal cliques is 
$2^{o(k)}$;
 \item[Step C.]  Apply the theory of potential maximal clique to solve  the  problem in time proportional to the number of vital potential maximal cliques, which is $2^{o(k)}$.
 \end{itemize}
 
Step~A, kernelization for  \textsc{Minimum Fill-in}, was known prior to our work. In 1994, Kaplan et al. gave a kernel with  $\cO(k^5)$ vertices. Later the kernelization was improved  to $O(k^3)$  in\cite{KaplanST99} and then to $2k^2 + 4k$ in \cite{NatanzonSS00}. Step~C, with some modifications,   is similar to the algorithm from  \cite{BouchitteT01,FominKTV08}. This is  Step~B which does not work  and instead of enumerating vital potential  maximal cliques we make a ``detour". We use   branching (recursive) algorithm that in subexponential time outputs subexponential  number of graphs   avoiding a specific combinatorial structure, non-reducible graphs. In non-reducible graphs we are able to enumerate vital potential  maximal cliques.
Thus Step~B is replaced with 
 \begin{itemize}
  \item[Step~B1.] 
Apply  branching algorithm to 
generate $n^{\cO(\sqrt{k})}$ non-reducible instances such that  the original instance is a  YES instance if and only if at least one of the generated non-reducible instances is a YES instance;
\item[Step~B2.] 
Show that if   $G$ is non-reducible,  then  all vital potential maximal cliques of $G$ can be enumerated in time  $n^{\cO(\sqrt{k})}$.

 \end{itemize}
  Putting together Steps~A, B1, B2, and C, we obtain the subexponential algorithm.

 \medskip
 It follows from our results that several other problems belong to SUBEPT. 
  We  show that within time $\cO(2^{\cO(\sqrt{k}\log{k})} +k^2nm)$ it is possible to solve   
 \textsc{Minimum Chain Completion},  and   \textsc{Triangulating Colored Graph}. For   
  \textsc{Chordal Graph Sandwich},  and we show that deciding if a sandwiched chordal graph $G$ can be obtained from $G_1$ by adding at most $k$ fill edges,  is possible in time   $\cO(2^{\cO(\sqrt{k}\log{k})} +k^2nm)$.
  
A \emph{chain} graph is a bipartite graph   where the sets of neighbors of vertices form an inclusion  chain. In the  \textsc{Minimum Chain Completion}  problem, we are asked if a bipartite graph can be turned into a chain graph by adding at most $k$ edges. The problem was introduced by Golumbic  \cite{Golumbic80} and Yannakakis \cite{Yannakakis81}. The concept of chain graph has surprising applications in ecology \cite{Mannila:2007,Patterson:1986bs}. 
Feder et al. in \cite{FederMT09} gave approximation algorithms for this problem.

The   {\sc Triangulating Colored Graph} problem  is a generalization of \textsc{Minimum Fill-in}.  The instance is a graph with some of its vertices colored; the task is to  add at most $k$  fill edges such that the resulting graph is chordal and no fill edge is monochromatic. We postpone the formal definition of the problem till Section~\ref{sec:other_problems}. The problem was studied intensively because of its close relation to 
\textsc{Perfect Phylogeny Problem}---fundamental and long-standing problem for numerical taxonomists
\cite{BonetPWY96,Buneman:1974ul,KannanW92}. 
The  {\sc Triangulating Colored Graph} problem is 
$NP$-complete~\cite{BodlaenderFW92} and   $W[t]$-hard for any $t$, when parametrized by the number of 
colours \cite{BodlaenderFHWW00}.  However, the problem is fixed parameter tractable when parameterized by the number of fill edges. 

 In \textsc{Chordal Graph Sandwich} we are given two graphs $G_1$ and $G_2$ on the same vertex set, and the question is if there is a chordal graph $G$ which is a supergraph of $G_1$ and a subgraph of $G_2$.  
The problem is a generalization of   {\sc Triangulating Colored Graph}. We refer to the paper of Golumbic et al.  
\cite{GolumbicKS95} for a general overview of   graph sandwich problems.

\medskip 
The remaining part of the paper is organized as follows. 
Section~\ref{sec:prelim} contains definitions and preliminary results. In Section~\ref{sec:branch}, we give branching algorithm, Step~B1.   Section~\ref{sec:pmc}  provides algorithm  enumerating  vital potential maximal cliques in non-reducible graphs, Step~B2. This is the most important part of our algorithm. It is based on a new insight into the combinatorial structure of potential maximal cliques. In Section~\ref{ss:dp}, we show how to adapt the algorithm from   \cite{BouchitteT01,FominKTV08}  to implement Step~C. The main algorithm is given in Section~\ref{sec:puttingtogether}. 
In Section~\ref{sec:other_problems}, we show how the     ideas used for  \textsc{Minimum Fill-in} can be used to obtain subexponential algorithms for other problems. To implement our strategy for   \textsc{Chordal Graph Sandwich}, we   have to provide a polynomial kernel for this problem. We conclude with   open problems in Section~\ref{sec:conclusion}.

\section{Preliminaries}\label{sec:prelim}
We denote by $G=(V,E)$ a finite, undirected and simple graph
with vertex set $V(G)=V$ and edges set $E(G)=E$. We also use   $n$ to denote the number of vertices and $m$ the number of edges in $G$.
For a nonempty subset $W \subseteq V$, the subgraph of $G$ induced
by $W$ is denoted by $G[W]$. 
We say that a vertex set $W\subseteq V$ is \emph{connected} if $G[W]$ is connected.
The \emph{open neighborhood} of a vertex $v$ is $N(v)=\{u\in V:~uv \in E\}$ and the 
\emph{closed neighborhood} is $N[v] = N(v) \cup \{v\}$. 
For a vertex set $W\subseteq V$ we put  $N(W) = \bigcup_{v \in W} N(v)\sm W$ and $N[W] = N(W) \cup W$.
Also for  $W \subset V$  we  define $\textbf{fill}_G(W)$, or simple $\textbf{fill}(W)$,  to be the number of non-edges of $W$, i.e. the number of pairs     
$u\neq v \in W$ such that  $uv \not\in E(G)$. 
We  use  $G_W$ to denote the graph obtained from graph $G$ by completing its vertex subset $W$ into a clique. 
 We refer to Diestel's book  \cite{Diestel} for basic definitions from Graph Theory.


\medskip\noindent\textbf{Chordal graphs and minimal triangulations.}
\textit{Chordal} or \textit{triangulated} graphs is the class of graphs containing no induced cycles of length more than three. 
In other words, every cycle of length at least four in a chordal graph contains a chord.
Graph $H = (V,E \cup F)$ is said to be a \textit{triangulation} of $G=(V,E)$ if $H$ is chordal. The triangulation 
$H$ is called \textit{minimal} if $H' = (V, E \cup F')$ is not chordal for every edge subset $F' \subset F$ and $H$ is a \textit{minimum} 
triangulation if $H' = (V, E \cup F')$ is not chordal for every edge set $F'$ such that  $|F'| < |F|$. 
The edge set $F$ for the chordal graph $H$ is called the \textit{fill} of $H$, and if $H$ is a minimum triangulation of $G$, then 
$|F|$ is the minimum fill-in for $G$.

Minimal triangulations can be described in terms of  vertex eliminations (also known as Elimination Game) \cite{Parter61,Fulkerson65}. 
Vertex elimination procedure  takes as input a vertex ordering $\pi \colon  \{1, 2, \dots, n\} \to V(G)$   of graph $G$ and 
outputs a chordal graph $H = H_n$. 
 We put  $H_0 = G$ and define $H_i$ to be the graph obtained from $H_{i-1}$ by completing all neighbours $v$ of $\pi(i)$ in $H_{i-1}$ with  $ \pi^{-1}(v)>i$ into a clique.
 An elimination ordering $\pi$ is called \emph{minimal} if 
the corresponding vertex elimination procedure outputs  a minimal triangulation  of $G$.

\begin{proposition}[\cite{OhtsukiCF76}]\label{char:ordering}
Graph $H$ is a minimal triangulation of $G$ if and only if there exists a minimal elimination ordering $\pi$ of $G$ such that the corresponding procedure outputs  $H$. 
\end{proposition}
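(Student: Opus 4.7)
The plan is to handle the two directions separately. The backward direction is essentially immediate from the definition: the text defines a minimal elimination ordering as one whose elimination procedure outputs a minimal triangulation, so if such $\pi$ exists then its output $H$ is by definition a minimal triangulation of $G$.

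For the forward direction, suppose $H$ is a minimal triangulation of $G$. The first ingredient is the classical Fulkerson--Gross characterization: every chordal graph admits a \emph{perfect elimination ordering} (PEO), i.e., an ordering $\pi\colon\{1,\dots,n\}\to V$ such that for every $i$ the set $N_H(\pi(i))\cap\{\pi(j):j>i\}$ is a clique in $H$. Fix such a PEO $\pi$ of $H$. The plan is to show that running the elimination procedure of the proposition on $G$ with this ordering produces exactly $H$, which will imply that $\pi$ is a minimal elimination ordering witnessing the proposition.

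The main technical step is to prove, by induction on $i$, that the graph $H_i$ produced at step $i$ satisfies $E(H_i)\subseteq E(H)$. The base case $H_0=G\subseteq H$ holds since $H$ is a triangulation of $G$. For the inductive step, assume $E(H_{i-1})\subseteq E(H)$. At step $i$ we complete the set $S_i=\{v\in N_{H_{i-1}}(\pi(i)):\pi^{-1}(v)>i\}$ into a clique. By induction, $S_i\subseteq N_H(\pi(i))\cap\{\pi(j):j>i\}$, and by the PEO property for $H$ this latter set is already a clique in $H$. Hence every edge added at step $i$ is already present in $H$, giving $E(H_i)\subseteq E(H)$.

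Letting $H'=H_n$ be the final output, we therefore have $H'\subseteq H$, and $H'$ is a triangulation of $G$ by the general fact that elimination always produces a chordal graph. Since $H$ is a \emph{minimal} triangulation, no proper chordal subgraph of $H$ (containing $G$) is a triangulation, so $H'=H$. Thus $\pi$ is a minimal elimination ordering whose procedure outputs $H$, completing the forward direction. The main potential obstacle is the invocation of PEO existence for chordal graphs, which we treat as a known classical fact; given this, the rest of the argument is a clean induction plus the minimality comparison.
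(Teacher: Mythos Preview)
The paper does not supply its own proof of this proposition: it is stated as a cited result from Ohtsuki, Cheung, and Fujisawa~\cite{OhtsukiCF76} and used as a black box. So there is no ``paper's approach'' to compare against.

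Your proof is correct. The backward direction is indeed definitional given how the text defines a minimal elimination ordering. For the forward direction, your inductive argument is clean: the key point is that if $E(H_{i-1})\subseteq E(H)$ then the later neighbours of $\pi(i)$ in $H_{i-1}$ are a subset of its later neighbours in $H$, and the PEO property of $\pi$ in $H$ makes that set a clique, so step~$i$ adds no edge outside $E(H)$. This yields $E(G)\subseteq E(H')\subseteq E(H)$ with $H'$ chordal, and minimality of $H$ forces $H'=H$. The only external ingredients you invoke---existence of a perfect elimination ordering for chordal graphs (Fulkerson--Gross) and the fact that vertex elimination always outputs a chordal graph---are classical and appropriate to cite here. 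Nothing is missing.
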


We will also need the following  description of the fill edges introduced by vertex eleminations. 
\begin{proposition}[\cite{RoseTL76}]\label{pr:orderFill}
Let $H$ be the chordal graph produced  by vertex elimination of graph $G$ according to ordering  $\pi$. 
Then $uv\not\in E(G)$ is a fill edge of $H$ if and only if there exists a path $P=uw_1w_2 \ldots w_\ell  v$ such that 
$\pi^{-1}(w_i) < \min(\pi^{-1}(u),\pi^{-1}(v))$ for each $1 \leq i \leq \ell$.
\end{proposition}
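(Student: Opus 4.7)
The plan is to prove the two implications separately, each by an induction that tracks what happens inside the sequence $H_0, H_1, \ldots, H_n$ produced by the elimination procedure.

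For the forward direction ($\Rightarrow$), I would induct on the step $t$ at which the fill edge $uv$ is first introduced, i.e.\ $uv \in E(H_t) \setminus E(H_{t-1})$. By the recursive definition of the procedure, both $u$ and $v$ are neighbors of $\pi(t)$ in $H_{t-1}$ with $\pi^{-1}(u), \pi^{-1}(v) > t$. If both of $u\pi(t)$ and $v\pi(t)$ already lie in $E(G)$, then the length-two path $u, \pi(t), v$ has its unique interior vertex at index $t$, which is strictly less than $\min(\pi^{-1}(u), \pi^{-1}(v))$, and we are done. Otherwise, one or both of these edges is itself a fill edge appearing at some earlier step $t' < t$; the inductive hypothesis applied to each such edge gives a witnessing path in $G$ from $u$ (respectively, from $v$) to $\pi(t)$ whose interior vertices have index strictly below $\min(\pi^{-1}(u), t) = t$. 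Concatenating the two witnessing paths at $\pi(t)$ yields the required $u$-to-$v$ path: its interior vertices have indices either strictly below $t$ (the inherited pieces) or equal to $t$ (namely $\pi(t)$ itself), and in either case strictly below $\min(\pi^{-1}(u), \pi^{-1}(v))$.

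For the backward direction ($\Leftarrow$), suppose the path $u, w_1, \ldots, w_\ell, v$ exists in $G$. I would prove a slightly stronger statement: if $M := \max_{1 \le i \le \ell} \pi^{-1}(w_i)$ (with $M := 0$ when $\ell = 0$), then in fact $uv \in E(H_M)$. The induction is on $M$. The base case $\ell = 0$ is immediate since $uv \in E(G) = E(H_0)$. For the step, pick the unique index $j$ with $\pi^{-1}(w_j) = M$ (unique because $\pi$ is a bijection) and split the path at $w_j$ into two subpaths $u,\ldots,w_j$ and $w_j,\ldots,v$. Each subpath satisfies the hypotheses with its own maximum interior index strictly less than $M$, so the inductive hypothesis gives $uw_j, w_jv \in E(H_{M-1})$. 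At step $M$ the procedure eliminates $\pi(M) = w_j$; since $\pi^{-1}(u), \pi^{-1}(v) > M$, the elimination rule completes $u$ and $v$ into an edge of $H_M$, hence of $H$.

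The main obstacle is keeping the inductive statements sharp enough to close the recursion. In the forward direction one must verify that the pivot $\pi(t)$, although newly inserted as an interior vertex during concatenation, still respects the strict inequality $< \min(\pi^{-1}(u), \pi^{-1}(v))$; this is exactly guaranteed by the fact that the two endpoints of the fill edge have index strictly exceeding~$t$. In the backward direction, the critical design choice is to strengthen the conclusion from ``$uv \in E(H)$'' to ``$uv \in E(H_M)$'', so that the two sub-edges are already available \emph{before} the pivotal elimination at step $M$. Beyond this careful bookkeeping, the argument needs nothing more than the explicit recursive definition of the elimination sequence.
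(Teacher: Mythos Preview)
The paper does not supply its own proof of this proposition; it is quoted as a known result from Rose, Tarjan, and Lueker~\cite{RoseTL76}. So there is no in-paper argument to compare against, and your task was really to reconstruct the classical proof.

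Your argument is essentially correct and is the standard one. Two small remarks. In the forward direction, the concatenation of the two inductively obtained paths at $\pi(t)$ is a priori only a walk from $u$ to $v$, since the interior vertices of the two pieces might overlap; you should note that one extracts a simple $u$--$v$ path from this walk, and since its interior vertices form a subset of the walk's interior vertices, the index bound is preserved. In the backward direction, your strengthened statement (``$uv \in E(H_M)$'' rather than merely ``$uv \in E(H)$'') is exactly the right move, and the base case $\ell=0$ is fine for that strengthened statement even though it is vacuous for the proposition as phrased (which assumes $uv\notin E(G)$). With these cosmetic points addressed, the proof is complete.
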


\medskip\noindent\textbf{Minimal separators.}
Let $u$ and $v$ be two non adjacent vertices of a graph $G$. 
A set of vertices $S \subseteq V$ is an {\em $u,v$-separator} if $u$ and $v$ are in different connected components
of the graph $G[V \sm S]$.  We say that $S$ is a {\em minimal $u,v$-separator} of $G$ if no proper subset of $S$ is an $u,v$-separator and  that $S$ is a {\em minimal separator} of $G$ if there are two vertices $u$ and $v$ such
that $S$ is a minimal $u,v$-separator. Notice that a minimal separator can be
contained  in another one. 
If a minimal separator is a clique, we refer to it as to a \emph{clique minimal separator}. 
A connected component $C$ of $G[V \sm S]$
is a {\em full} component associated to $S$ if $N(C) = S$.
The following proposition is an exercise  in \cite{Golumbic80}.
\begin{proposition}[Folklore]\label{pr:full_components}
A set $S$ of vertices of $G$ is a minimal $a,b$-separator if and
only if $a$ and $b$ are in different full components associated to $S$.
In particular, $S$ is a minimal separator if and only if
there are at least two distinct full components associated to $S$.
\end{proposition}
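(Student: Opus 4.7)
The plan is to prove both directions separately, exploiting the neighborhood structure of connected components of $G[V\setminus S]$.

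For the forward direction, assume $S$ is a minimal $a,b$-separator. Let $C_a$ and $C_b$ be the connected components of $G[V\setminus S]$ containing $a$ and $b$ respectively; they are distinct because $S$ separates $a$ from $b$. I claim $N(C_a)=S$, and symmetrically $N(C_b)=S$. The inclusion $N(C_a)\subseteq S$ holds because $C_a$ is a maximal connected subset of $G[V\setminus S]$. For the reverse inclusion, suppose there exists $v\in S\setminus N(C_a)$. I will argue that $S'=S\setminus\{v\}$ still separates $a$ from $b$, which contradicts minimality. Indeed, in $G[V\setminus S']$ the vertex $v$ is isolated from $C_a$ (because $v$ has no neighbor in $C_a$), and the only way to leave $C_a$ would be through a vertex of $N(C_a)\subseteq S'$, which has been removed. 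Hence the component of $a$ in $G[V\setminus S']$ is precisely $C_a$, and $b\notin C_a$, so $S'$ separates $a$ from $b$. This contradicts minimality, so $N(C_a)=S$, i.e. $C_a$ is a full component associated to $S$.

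For the reverse direction, suppose $a$ and $b$ lie in distinct full components $C_a,C_b$ with $N(C_a)=N(C_b)=S$. Since $C_a$ and $C_b$ are different components of $G[V\setminus S]$, the set $S$ separates $a$ from $b$. To show minimality, pick any $v\in S$. Because $N(C_a)=S$ and $N(C_b)=S$, the vertex $v$ has at least one neighbor $x\in C_a$ and at least one neighbor $y\in C_b$. Concatenating a path from $a$ to $x$ inside $C_a$, the edge $xv$, the edge $vy$, and a path from $y$ to $b$ inside $C_b$ produces a walk from $a$ to $b$ in $G[V\setminus(S\setminus\{v\})]$. Hence $S\setminus\{v\}$ fails to separate $a$ and $b$, so $S$ is minimal.

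The ``in particular'' statement follows immediately: if $S$ is a minimal separator it is a minimal $a,b$-separator for some pair $a,b$, and the forward direction produces two distinct full components; conversely, if there are two full components $C_1,C_2$ associated to $S$, pick $a\in C_1$ and $b\in C_2$ and apply the reverse direction to conclude $S$ is a minimal $a,b$-separator.

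There is no real obstacle here; the only delicate point is the argument that removing a vertex $v\in S\setminus N(C_a)$ preserves the separation, which needs the observation that the component of $a$ does not grow when $v$ is put back, precisely because $v$ has no neighbor in $C_a$.
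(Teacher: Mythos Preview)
The paper does not actually prove this proposition; it is stated as folklore and attributed as an exercise in Golumbic's book, with no argument given. Your proof is correct and is the standard one: the key observations are exactly that (i) a vertex of $S$ with no neighbor in $C_a$ can be dropped without enlarging the component of $a$, contradicting minimality, and (ii) any vertex of $S$ lying in $N(C_a)\cap N(C_b)$ can be used to splice together an $a$--$b$ path avoiding the rest of $S$.
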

%


\medskip\noindent\textbf{Potential Maximal Cliques }  are 
 combinatorial objects which properties are crucial for our algorithm. 
A vertex set $\Omega$ is defined as a \textit{potential maximal clique} in graph $G$ 
if there is some minimal triangulation $H$ of $G$ such that $\Omega$ is a maximal clique of $H$.
Potential maximal cliques were defined by Bouchitt\'e and Todinca in   \cite{BouchitteT01,BouchitteT02}.

The following proposition was proved by Kloks et al.  for minimal separators \cite{KloksKS97} and by Bouchitt\'e and Todinca for potential maximal cliques \cite{BouchitteT01}. 

\begin{proposition}[\cite{BouchitteT01,KloksKS97}]\label{char:pmc}
Let $X$ be either a  potential maximal clique or a minimal separator of $G$, and 
let $G_{X} $ be  the graph obtained from $G$  by completing $X$ into a clique. 
Let $C_1,C_2,\ldots, C_r$ be the connected components of $G \sm X$. 
Then graph $H$ obtained from $G_X$ by adding a set of fill edges $F$  is a minimal triangulation of $G$  if and only if 
$F = \bigcup_{i=1}^r F_i$, where $F_i$ is the set of fill edges in a minimal triangulation of $G_{X}[N[C_i]]$. 
\end{proposition}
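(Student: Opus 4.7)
The proof splits into two implications, and in both the structural tool is that $X$ is a clique separator of $H$, so the clique-separator decomposition of chordal graphs applies. This generalizes the corresponding statement for minimal separators (Kloks--Kratsch--Spinrad) and for potential maximal cliques (Bouchitt\'e--Todinca).

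For the forward direction, assume $H = G_X + F$ is a minimal triangulation of $G$. I first show that no edge of $F$ crosses between two distinct components $C_i$, $C_j$: any cycle through such an edge $uv$ must leave and re-enter the clique $X$, hence contains two distinct vertices of $X$ joined by an edge of $G_X$, producing a chord avoiding $uv$. A short induction on cycle length then yields that $H - uv$ is still chordal, contradicting minimality. Once we know $F \subseteq \bigcup_i [N[C_i]]^2$, set $F_i := F \cap [N[C_i]]^2$; each $H[N[C_i]] = G_X[N[C_i]] + F_i$ is chordal, and minimality of $H$ transfers to minimality of each $F_i$ as a fill for $G_X[N[C_i]]$, because a hole in $H[N[C_i]] - e$ survives in $H - e$ (any cycle that leaves $N[C_i]$ again picks up a chord from the clique $X$).

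For the backward direction, assume $F = \bigcup_i F_i$ with each $H[N[C_i]] = G_X[N[C_i]] + F_i$ a minimal triangulation of $G_X[N[C_i]]$. Chordality of $H$ is immediate from the same clique-separator argument: an induced cycle of length at least four in $H$ either lies inside some $N[C_i]$, where chordality of $H[N[C_i]]$ excludes it, or uses at least two vertices of the clique $X$ and therefore has a chord. For minimality of $H$ as a triangulation of $G$, I take any fill edge $e$ of $H$ relative to $G$ and display a hole in $H - e$. When $e \in F_i$, local minimality of $H[N[C_i]]$ directly gives such a hole.

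The main obstacle, and the step where the hypothesis on $X$ is essential, is the remaining case $e = uv \in G_X \setminus E(G)$ with $u, v \in X$ non-adjacent in $G$. If $X$ is a minimal separator, Proposition~\ref{pr:full_components} supplies two full components $C_j$, $C_\ell$ with $N(C_j) = N(C_\ell) = X$, so $u$ and $v$ have neighbors in each; gluing shortest $u$--$v$ paths through $G[C_j \cup \{u,v\}]$ and $G[C_\ell \cup \{u,v\}]$ yields a cycle of length at least four in $H - e$ avoiding the edge $uv$. Any chord of this cycle in $H - e$ would have to lie inside a single $H[N[C_i]]$, since cross-block chords are blocked by the separator structure and the only candidate edge between $u$ and $v$ across $X$ was $e$; the shortest-path choice, combined with minimality of the local triangulations, rules out these inner chords. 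If $X$ is a potential maximal clique, the analogous step uses the Bouchitt\'e--Todinca characterization (for every non-edge of $X$ there is a component whose boundary contains both endpoints), but the hole-construction requires more careful case analysis since a single component may have to supply both halves of the witnessing cycle; this is the part I expect to demand the most technical work.
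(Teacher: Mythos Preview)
The paper does not prove Proposition~\ref{char:pmc}; it is quoted from \cite{BouchitteT01,KloksKS97} without argument, so there is no in-paper proof to compare against. Your overall architecture---use that $X$ is a clique separator of $H$ to localize both chordality and minimality to the blocks $N[C_i]$---is the right one and matches the original sources. Several steps as written, however, do not go through.

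In the forward direction, the assertion that ``any cycle through such an edge $uv$ must leave and re-enter the clique $X$'' presupposes that $X$ separates the $C_i$'s in $H$, which is exactly what you are trying to establish (namely that $F$ has no cross-edges); other cross-edges in $F$ could carry the cycle from $C_i$ to $C_j$ without touching $X$. The clean fix is not induction on cycle length but to delete \emph{all} cross-edges $F'\subseteq F$ at once: in $H\sm F'$ the set $X$ genuinely separates the $C_i$, every induced cycle of length at least four meets the clique $X$ in at most two (necessarily consecutive) vertices and hence lies in a single $N[C_i]$, and $(H\sm F')[N[C_i]]=H[N[C_i]]$ is chordal as an induced subgraph of $H$. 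Thus $H\sm F'$ is a triangulation of $G$ strictly contained in $H$ unless $F'=\emptyset$.

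In the backward direction with $e=uv\in E(G_X)\sm E(G)$, your shortest $u$--$v$ paths must be taken in $H$, not in $G$: a shortest path in $G[C_j\cup\{u,v\}]$ can acquire chords from $F_j$, and the appeal to ``minimality of the local triangulations'' does not help, since $uv$ is an edge of $G_X[N[C_j]]$ and hence not a fill edge there. Taking the paths shortest in $H[C_j\cup\{u,v\}]$ and $H[C_\ell\cup\{u,v\}]$ repairs this immediately. Finally, the potential-maximal-clique case is easier than you anticipate: by Proposition~\ref{pr:pmc_sep} the non-edge $uv$ lies in some $S_i=N(C_i)$, and $S_i$ is itself a minimal separator of $G$ with $C_i$ one full component; a second full component $D$ of $G\sm S_i$ (Proposition~\ref{pr:full_components}) supplies the other shortest path, and since no edge of $H$ joins $C_i$ to $D$ the two paths glue to a chordless cycle in $H-uv$ exactly as in the separator case.
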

%



The following result about  the structure of potential maximal cliques is due to
Bouchitt\'e and Todinca.

\begin{proposition}[\cite{BouchitteT01}]\label{pr:pmc_sep}
Let $ \Omega \subseteq V$ be a set of vertices of the graph $G$. 
Let $\{ C_1, C_2, \ldots, C_p\}$ be the set of the connected
components of $G \sm \Omega$ and  $\{ S_1,S_2,\ldots,S_p\}$, where $S_i = N(C_i)$  for  $i \in \{1,2,\ldots ,p\}$. 
Then $\Omega$ is a potential maximal clique of $G$ if and only if:
\begin{enumerate}
\item $G \sm \Omega$ has no full component associated to $\Omega$, and
\item the graph on the vertex set $\Omega$ obtained from $G[\Omega]$ by
completing each $S_i$,  $i \in \{1,2,\ldots,p\}$, into a clique, is a complete graph.
\end{enumerate}
Moreover, if $\Omega$ is a potential maximal clique, then
$\{S_1,S_2,\ldots,S_p\}$ is the set of minimal separators of $G$ contained in $\Omega$.
\end{proposition}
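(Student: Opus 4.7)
Assume $\Omega$ is a potential maximal clique, witnessed by a minimal triangulation $H$ of $G$ in which $\Omega$ is a maximal clique. For condition~1, suppose for contradiction that $N_G(C_i)=\Omega$ for some $i$. Since $H\supseteq G$, the component of $H\setminus\Omega$ containing $C_i$ still has $H$-neighborhood equal to $\Omega$, and maximality of $\Omega$ in $H$ then yields $v\in C_i$ and $u\in\Omega$ with $uv\notin E(H)$. I would exploit the full-component property to locate a fill edge inside $\Omega$ whose removal still leaves a chordal graph, contradicting minimality of $H$. For condition~2, let $uv$ be a non-edge of $G[\Omega]$; it is a fill edge of $H$, so by Proposition~\ref{pr:orderFill} applied to any minimal elimination ordering producing $H$ there is a $u,v$-path in $G$ whose interior lies entirely outside $\Omega$. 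This interior sits inside a single component $C_i$, whence $u,v\in S_i$ and completing $S_i$ supplies the edge $uv$.

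\textbf{Backward direction.} Given conditions~1 and~2, I explicitly construct a minimal triangulation $H$ of $G$ in which $\Omega$ is maximal. Let $G_\Omega$ denote $G$ with $\Omega$ completed into a clique; by condition~2 each $S_i$ is then already a clique in $G_\Omega$. For each $i$ choose a minimal triangulation $H_i$ of $G_\Omega[N[C_i]]$, and set $H:=G_\Omega\cup\bigcup_i H_i$. Chordality of $H$ is verified by case analysis on a putative chordless cycle of length at least four: such a cycle cannot live inside $\Omega$ (a clique), cannot live inside a single $N[C_i]$ (chordal by choice of $H_i$), and any cycle that crosses between some $C_i$ and $\Omega$ must enter and leave $C_i$ through the clique $S_i$, providing a chord. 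That $H$ is then a \emph{minimal} triangulation of $G$ is a direct application of Proposition~\ref{char:pmc}. Maximality of $\Omega$ in $H$ follows from condition~1: any $v\notin\Omega$ lies in some $C_i$ with $N_H(v)\subseteq N[C_i]$, so $v$ being adjacent to all of $\Omega$ would force $\Omega\subseteq S_i$ and make $C_i$ a full component. The main obstacle here is the chordality analysis, specifically for cycles alternating between $\Omega$ and several distinct $C_i$; it is precisely here that condition~2 --- which forces every non-edge of $G[\Omega]$ to be covered by some $S_i$ --- does the real work.

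\textbf{Moreover clause.} Each $C_i$ is a full component of $S_i$ in $G$ by definition; a second full component comes from the triangulation $H$ just constructed. In a clique tree of $H$, $S_i$ appears as the intersection of $\Omega$ with a neighboring maximal clique, so it is a minimal separator of $H$, and minimal separators of a minimal triangulation of $G$ are minimal separators of $G$. Conversely, any minimal separator $S\subseteq\Omega$ of $G$ must equal some $S_i$: at least one of its two full components is disjoint from $\Omega$ (otherwise condition~2 provides a path through $V\setminus\Omega$ between two $\Omega$-vertices in different full components, contradicting that $S$ separates them), and that component is then a union of some $C_i$'s with common neighborhood $S$, giving $S=S_i$.
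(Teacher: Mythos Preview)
The paper does not prove Proposition~\ref{pr:pmc_sep}; it is quoted from Bouchitt\'e and Todinca without argument. So there is nothing to compare against, and your attempt must stand on its own. Two steps in your sketch are not actually carried out.

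\textbf{Forward direction, condition~1.} You write that you ``would exploit the full-component property to locate a fill edge inside $\Omega$ whose removal still leaves a chordal graph, contradicting minimality of $H$.'' That is a promissory note, not a proof, and the route you hint at (finding a removable fill edge) is not the natural one. The clean argument is that in any chordal graph no component of the complement of a maximal clique is full: using a clique tree of $H$, each component of $H\setminus\Omega$ corresponds to a subtree obtained by deleting the node $\Omega$, and its neighbourhood in $H$ is $\Omega\cap K'$ for the adjacent maximal clique $K'$, hence a proper subset of $\Omega$. If some $C_i$ were full for $\Omega$ in $G$, the component of $H\setminus\Omega$ containing it would also be full in $H$, a contradiction.

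\textbf{Backward direction, minimality of $H$.} You invoke Proposition~\ref{char:pmc} to conclude that $H$ is a minimal triangulation of $G$. But as stated in this paper, Proposition~\ref{char:pmc} has as a hypothesis that $X$ is already a potential maximal clique (or a minimal separator) of $G$. You are trying to prove that $\Omega$ is a potential maximal clique, so appealing to Proposition~\ref{char:pmc} with $X=\Omega$ is circular. You must argue minimality directly. For fill edges lying in some $H_i$ but not in $G_\Omega[N[C_i]]$ this is easy (minimality of $H_i$ gives a chordless cycle inside $N[C_i]$, and $H[N[C_i]]=H_i$). The delicate case is a fill edge $uv$ with $u,v\in\Omega$: here $uv$ is \emph{not} a fill edge of any $H_i$ relative to $G_\Omega$, so minimality of the $H_i$ says nothing, and one has to produce a chordless cycle in $H\setminus\{uv\}$ by hand from a suitable $u$--$v$ path through the relevant $C_j$ together with an analysis of what $H_j$ can and cannot add. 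This is where the actual content of the backward direction lives, and your sketch skips it entirely.

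The chordality case analysis and the ``moreover'' clause are essentially fine (for the latter, note that a full component $D\subseteq V\setminus\Omega$ of $S$ must in fact be a single $C_i$, not a union, since distinct $C_i$'s are already separated by $\Omega\supseteq S$).
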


We will need also the following proposition from \cite{FominKTV08}.
\begin{proposition}[\cite{FominKTV08}]\label{pr:vertexRep}
Let $\Omega$ be a potential maximal clique of $G$. Then for every $y \in \Omega$, $\Omega=N_G(Y ) \cup \{y\}$, where 
$Y$ 
is   the connected component of $G \sm (\Omega \sm \{y\})$ containing $y$. 
\end{proposition}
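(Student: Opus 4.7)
The plan is to prove the two inclusions $N_G(Y)\cup\{y\}\subseteq \Omega$ and $\Omega\subseteq N_G(Y)\cup\{y\}$ separately. First note that $y$ belongs to $G\sm(\Omega\sm\{y\})$, so $Y$ is well defined and $Y\cap\Omega=\{y\}$.

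For the inclusion $N_G(Y)\cup\{y\}\subseteq\Omega$, I would argue as follows. Clearly $y\in\Omega$. Now let $z\in N_G(Y)$, so $z$ has a neighbor in $Y$ but $z\notin Y$. Since $Y$ is a connected component of $G\sm(\Omega\sm\{y\})$ and $z\notin Y$ while $z$ is adjacent to $Y$, $z$ cannot lie in any other component of $G\sm(\Omega\sm\{y\})$. Hence $z\in\Omega\sm\{y\}\subseteq\Omega$.

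The nontrivial direction is $\Omega\sm\{y\}\subseteq N_G(Y)$, and this is where Proposition~\ref{pr:pmc_sep} comes in. Fix $x\in\Omega\sm\{y\}$. If $xy\in E(G)$, then since $y\in Y$ and $x\notin Y$ (as $x\in\Omega\sm\{y\}$), we immediately get $x\in N_G(Y)$. Otherwise $xy\notin E(G)$; let $\{C_1,\ldots,C_p\}$ be the connected components of $G\sm\Omega$ with $S_i=N(C_i)$. By condition (2) of Proposition~\ref{pr:pmc_sep}, the graph obtained from $G[\Omega]$ by completing each $S_i$ is complete, so the missing edge $xy$ must be created by the completion of some $S_i$, i.e.\ $\{x,y\}\subseteq S_i$. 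Since $y\in S_i=N(C_i)$, there is a vertex $c\in C_i$ adjacent to $y$. Because $c\notin\Omega$ and $c$ is adjacent to $y\in Y$, we have $c\in Y$; moreover $C_i$ is connected in $G\sm\Omega$, hence connected in the larger graph $G\sm(\Omega\sm\{y\})$, so the whole of $C_i$ is contained in $Y$. Finally $x\in S_i=N(C_i)$ means $x$ has a neighbor in $C_i\subseteq Y$, and since $x\in\Omega\sm\{y\}$ is removed from $G\sm(\Omega\sm\{y\})$, $x\notin Y$. Therefore $x\in N_G(Y)$, as required.

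The only real obstacle is handling the case $xy\notin E(G)$, which is resolved by invoking condition (2) of Proposition~\ref{pr:pmc_sep} to locate a minimal separator $S_i\subseteq\Omega$ containing both $x$ and $y$ and then using the associated full-ish component $C_i$ as a ``bridge'' inside $Y$ connecting $x$ to $y$. Once this bridging observation is in place, both inclusions follow cleanly and the proof is complete.
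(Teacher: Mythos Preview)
Your argument is correct. The paper itself does not give a proof of this proposition; it is quoted from \cite{FominKTV08} as a known fact, so there is no ``paper's own proof'' to compare against. Your approach---splitting into the two inclusions and, for the nontrivial direction $\Omega\setminus\{y\}\subseteq N_G(Y)$, invoking condition~(2) of Proposition~\ref{pr:pmc_sep} to find a component $C_i$ of $G\setminus\Omega$ with $x,y\in N(C_i)$ and then absorbing $C_i$ into $Y$---is exactly the standard way this result is derived from the Bouchitt\'e--Todinca characterization, and matches how it is proved in the cited source. One minor stylistic remark: your phrase ``full-ish component'' is imprecise (indeed $C_i$ is \emph{not} full with respect to $\Omega$ by condition~(1) of Proposition~\ref{pr:pmc_sep}); you only use that $x,y\in N(C_i)$, which is all that is needed, so simply drop the adjective.
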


A naive approach of deciding if a given vertex subset is a potential maximal clique would be to try all possible minimal triangulations. There is a much faster approach of recognizing potential maximal cliques  due to Bouchitt\'e and Todinca based on Proposition~\ref{pr:pmc_sep}.

\begin{proposition}[\cite{BouchitteT01}]\label{pr:pmc_rec}
There is an algorithm that, given a graph $G = (V,E)$ and a set of
vertices $ \Omega \subseteq V$, verifies if $ \Omega$ is a potential maximal
clique of $G$ in time 
  $\cO(nm)$.
\end{proposition}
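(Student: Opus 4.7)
The plan is to exploit the combinatorial characterization of potential maximal cliques given by Proposition~\ref{pr:pmc_sep}, which reduces the verification to checking two explicit conditions on the connected components $C_1,\dots,C_p$ of $G[V\sm \Omega]$ and their neighborhoods $S_i:=N(C_i)$: (1) no $S_i$ equals $\Omega$ (no full component), and (2) the graph obtained from $G[\Omega]$ by turning each $S_i$ into a clique is complete on $\Omega$.

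First I would compute the components $C_1,\dots,C_p$ of $G[V\sm \Omega]$ by a single BFS/DFS traversal in $\cO(n+m)$ time, labeling every vertex $w\in V\sm \Omega$ with the index of its component. In the same traversal I would accumulate each $S_i$ by scanning the edges leaving $C_i$; this costs $\cO(n+m)$ in total. Condition~(1) is then verified by a direct size comparison: it fails only if $|S_i|=|\Omega|$ for some $i$, a check absorbed in $\cO(n+m)$.

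The delicate part is condition~(2), which demands that every non-edge $uv$ of $G[\Omega]$ be ``covered'' by some $S_i$, i.e.\ $u,v\in S_i$ for a common $i$. The key reformulation is that this happens if and only if there is a $u$--$v$ path in $G$ whose internal vertices all lie in $V\sm \Omega$. Consequently, for each $u\in \Omega$ I would perform a BFS starting at $u$ and restricted to $\{u\}\cup(V\sm \Omega)$: from $u$ the search steps into any neighbor in $V\sm \Omega$ and afterwards stays inside $G[V\sm \Omega]$. Whenever a vertex $w\in V\sm \Omega$ is dequeued, I would also scan its neighbors in $\Omega$ and mark them as ``covered from $u$''. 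By the reformulation, the set of marked vertices is precisely $\bigcup_{i:\,u\in S_i} S_i$, so condition~(2) holds at $u$ iff every $v\in \Omega\sm \{u\}$ with $uv\notin E(G)$ is marked; otherwise the algorithm returns NO.

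Each BFS together with the subsequent scan over $\Omega\sm \{u\}$ costs $\cO(n+m)$, so the total time over all $u\in \Omega$ is $\cO(|\Omega|(n+m))=\cO(n^2+nm)$; discarding isolated vertices first ensures $m\geq n/2$, absorbing the $n^2$ term into $\cO(nm)$ and giving the claimed complexity. The one conceptual step to check carefully is the equivalence between the clique-completion condition of Proposition~\ref{pr:pmc_sep} and path-connectivity through $V\sm \Omega$; once that is in hand the rest is routine bookkeeping around a standard BFS.
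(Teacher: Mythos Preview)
The paper does not give its own proof of this proposition; it merely quotes the result from Bouchitt\'e and Todinca~\cite{BouchitteT01}. Your argument is precisely the intended one: use the structural characterization of Proposition~\ref{pr:pmc_sep} and verify its two conditions, the second via the equivalence ``$u,v$ lie in a common $S_i$ $\Leftrightarrow$ there is a $u$--$v$ path with all internal vertices in $V\sm\Omega$'', implemented by one BFS per vertex of $\Omega$. This is exactly how the cited reference proceeds, so there is nothing to contrast.

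One small quibble: the sentence about discarding isolated vertices to force $m\geq n/2$ is a little glib. You should also note that an isolated vertex in $\Omega$ makes condition~(2) fail immediately unless $|\Omega|=1$, and isolated vertices outside $\Omega$ contribute empty $S_i$'s and can be ignored; with that observation the reduction to a graph with $m\geq n'/2$ is clean and the $\cO(nm)$ bound follows. This is cosmetic and does not affect correctness.
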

%

%


\noindent\textbf{Parameterized complexity.}
A parameterized problem $\Pi$ is a subset of $\Gamma^{*}\times \mathbb{N}$ for some finite alphabet $\Gamma$. An instance of a parameterized problem consists of $(x,k)$, where $k$ is called the parameter. 
 A central notion in 
parameterized complexity is {\em fixed parameter tractability (FPT)} which means, for a given instance $(x,k)$, 
solvability in time $f(k)\cdot p(|x|)$, where $f$ is an arbitrary function of $k$ and $p$ is a polynomial in the input size. We refer to the book of Downey and Fellows \cite{DowneyF99} for further reading on Parameterized Complexity. 

\noindent\textbf{Kernelization.}
 A \emph{kernelization algorithm} for a  parameterized problem 
$\Pi\subseteq \Gamma^{*}\times \mathbb{N}$ is an algorithm that given $(x,k)\in \Gamma^{*}\times \mathbb{N} $ 
outputs in time polynomial in $|x|+k$ a pair $(x',k')\in \Gamma^{*}\times \mathbb{N}$, called \emph{kernel} such that
$(x,k)\in \Pi$  if and only if $(x',k')\in \Pi$, 
 $|x'|\leq g(k)$,  and 
 $k' \leq k$, where $g$ is some computable function. The function 
$g$ is referred to as the size of the kernel. If~$g(k)=k^{O(1)}$ then we say that $\Pi$ admits a polynomial kernel. 
 
 There are several known polynomial kernels for the  \textsc{Minimum Fill-in} problem  \cite{focs/KaplanST94,KaplanST99}. The best known kernelization algorithm is due to Natanzon et al. \cite{NatanzonSS98,NatanzonSS00}, which for a given instance $(G,k)$ outputs in time $\cO(k^2 nm)$ an instance $(G',k')$ such that $k'\leq k$, $|V(G')|\leq  2k^2 + 4k$, and $(G,k)$ is a YES instance if and only if $(G',k')$ is. 
 \begin{proposition}[\cite{NatanzonSS98,NatanzonSS00}] \label{prop:polykernel}
   \textsc{Minimum Fill-in} has a kernel with vertex set of size $\cO(k^2)$. The running time of the kernelization algorithm is  $\cO(k^2 nm)$.
 \end{proposition}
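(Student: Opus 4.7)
The plan is to establish the kernel via a small set of polynomial-time safe reduction rules, and then show that after exhaustive application any YES-instance has $\cO(k^2)$ vertices. The starting point is the classical \emph{simplicial vertex rule}: if $v \in V(G)$ has $N(v)$ inducing a clique, delete $v$ (without changing $k$). This is safe because there is always a minimum triangulation in which $v$ is eliminated first, so $v$ is incident to no fill edge. A second family of rules handles forced chords of short cycles: for every induced $4$-cycle $C$ in $G$, some diagonal must appear in the fill; whenever a candidate diagonal $uv$ can be shown by local counting to lie in every optimal fill (for instance, when too many induced $4$-cycles or longer chordless cycles through $u,v$ would otherwise require their own fill edges), I would add $uv$ to $E(G)$ and decrement $k$.

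After exhaustive reduction, the safety of these rules ensures that $(G,k)$ is YES iff the reduced $(G',k')$ is YES and $k' \le k$. The bound on $|V(G')|$ is the content of the kernel. The rough outline I would follow is: since no simplicial-vertex rule applies, every $v \in V(G')$ has a pair of non-adjacent neighbors, hence sits on an induced chordless cycle; let $H = (V, E \cup F)$ be a minimum triangulation with $|F| \le k$. For every non-simplicial $v$ one can charge an induced non-edge $uw$ in $N_G(v)$ that must be covered either by a fill edge or by a near-by chord, and one shows that the chosen reduction rules ensure that for each fill edge $e \in F$ the number of vertices that can be charged to $e$ is $\cO(k)$. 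Summing over $|F| \le k$ gives $|V(G')| = \cO(k^2)$.

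The main obstacle will be the sharp $\cO(k^2)$ bound (rather than the easier $\cO(k^3)$ one that a direct argument yields). Getting down to $k^2$ requires the second rule above to be strong enough to guarantee that, once exhausted, the induced $4$-cycles and their diagonals in $G'$ are ``tightly packed'' around the fill edges of any optimum, so that a single fill edge cannot be charged by $\Omega(k)$ distinct non-simplicial vertices in more than one way. I expect this to come from a crown-like or LP-style argument: one decomposes the non-edges of $G'$ into an essential part (edges that appear in every optimal fill) and a residual part, shows that the essential part has size $\cO(k)$ and removes it, and then uses the reduced structure to bound the residual incidences.

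For the running time, each reduction rule can be checked in $\cO(nm)$ time using standard data structures for chordality and $4$-cycle enumeration, and the number of productive applications is bounded by $\cO(k)$ (each application either decreases $k$ by at least one or strictly decreases the number of vertices while preserving $k$, and the latter can happen only polynomially often between two successive budget decrements). A careful amortized analysis then yields the claimed total bound of $\cO(k^2 nm)$. The output $(G',k')$ satisfies $k' \le k$ and $|V(G')| = \cO(k^2)$, giving the polynomial kernel.
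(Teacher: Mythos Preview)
The paper does not prove this proposition; it is stated with a citation to Natanzon, Shamir, and Sharan and then used as a black box. There is therefore no in-paper argument to compare your attempt against.

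On its own merits, your proposal is a plan rather than a proof, and the central step is missing. The simplicial-vertex rule is correct and safe, but your second rule is never made precise (``whenever a candidate diagonal $uv$ can be shown by local counting to lie in every optimal fill''), and for the decisive $\cO(k^2)$ size bound you explicitly defer to an unspecified ``crown-like or LP-style argument.'' You already concede that the direct charging argument you sketch gives only $\cO(k^3)$; bridging that gap is precisely the content of the cited result, and nothing in your outline supplies it. Invoking crowns or LP relaxations by name is not an argument --- those are tools tailored to vertex-cover-type problems, and there is no evident mechanism by which they transfer to fill-in.

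For calibration: the paper's own Lemma~\ref{lem:kernel_sandwich} carries out a kernelization for the sandwich variant using rules of exactly the flavor you describe (a no-cycle-vertex rule generalizing simpliciality, and a safe-edge rule that adds a non-edge $\{x,y\}$ once more than $2k$ common neighbors $w$ witness chordless cycles through $x,w,y$), and that argument yields only an $\cO(k^3)$ kernel. Reaching $2k^2+4k$ genuinely requires a further idea from \cite{NatanzonSS00}, which your sketch does not contain.
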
 
 

\section{Branching}\label{sec:branch}

In our algorithm we  apply branching procedure (Rule~\ref{br:rule1}) whenever it is possible.  
To describe this rule, we need some definitions. Let $u,v$ be two nonadjacent vertices of $G$, and let $X=N(u)\cap N(v)$ be the common neighborhood of $u$ and $v$. Let also   $P = uw_1w_2\dots w_\ell v$ be a chordless  $uv$-path. In other words,  any two vertices of $P$ are adjacent if and only if they are consecutive   vertices in $P$.  We say that  \emph{visibility of $X$ from   $P$ is obscured}
 if $ |X \sm N(w_i)|\geq \sqrt{k} $ for every $i\in \{1, \dots, \ell\}$. Thus every internal vertex of $P$ is nonadjacent to at least $\sqrt{k}$ vertices of $X$. 
 
 The idea behind  branching   is based on the observation that  every  fill-in of $G$ with at most $k$ edges should either contain fill  edge $uv$, or should make at least one internal vertex of the path to be adjacent to all vertices of $X$.


\begin{Rule}[Branching Rule]\label{br:rule1}
If   instance $(G=(V,E),k)$ of \textsc{Minimum Fill-in} contains  
a pair of nonadjacent vertices $u,v \in V$ and   a chordless  $uv$-path
 $P = uw_1w_2\dots w_\ell v$  such that   visibility of   $X=N(u) \cap N(v)$ from $P$ is obscure, then branch into $\ell +1$  instances $(G_0,k_0),(G_1,k_1),\ldots, (G_\ell,k_\ell)$. 
 Here 
 \begin{itemize}
 \item $G_0= (V,E\cup \{uv\})$, $k_0 = k-1$;
 \item For $i\in \{1,\dots, \ell\}$, $G_i=  (V,E \cup F_i)$, $k_i = k-|F_i|$, where 
 $F_i=\{w_ix | x\in X   \wedge w_{i}x \not\in E\}$.
 \end{itemize} 
%
\end{Rule}

\begin{lemma}\label{lem:sound_branching}   Rule~\ref{br:rule1} is sound, i.e. $(G,k)$ is a YES instance   if and only if   
$(G_i,k_i)$ is a YES instance for some  $i\in \{0,\dots, \ell\}$. 
\end{lemma}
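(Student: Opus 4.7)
The plan is to prove the two directions separately, with the reverse direction being bookkeeping and the forward direction resting on a small structural argument about minimal separators in chordal graphs. For the reverse direction, suppose $(G_i,k_i)$ is a YES instance, witnessed by a chordal completion $(V,E\cup F_i\cup F')$ with $|F'|\le k_i$ (taking $F_0=\{uv\}$). Then $F_i\cup F'$ is a fill of $G$ of size at most $k_i+|F_i|\le k$ producing the very same chordal graph, so $(G,k)$ is YES.

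For the forward direction, fix a chordal completion $H=(V,E\cup F)$ of $G$ with $|F|\le k$. If $uv\in F$, then $F\setminus\{uv\}$ already witnesses that $(G_0,k_0)$ is YES. Otherwise $uv\notin E(H)$, and the goal is to produce some $j\in\{1,\ldots,\ell\}$ with $F_j\subseteq F$; this immediately gives $F\setminus F_j$ as a fill of $G_j$ of size at most $k-|F_j|=k_j$. To find such a $j$, let $S$ be a minimal $u,v$-separator in $H$. Since $H$ is chordal, $S$ is a clique (a standard property of chordal graphs). Every $x\in X=N_G(u)\cap N_G(v)$ is still a common neighbor of $u$ and $v$ in $H$, so if some $x$ were outside $S$ the length-$2$ path $u$-$x$-$v$ would survive in $H\setminus S$; hence $X\subseteq S$. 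The chordless $uv$-path $P$ is preserved in $H$ (since $E(G)\subseteq E(H)$), and $S$ must meet its interior in order to separate $u$ from $v$, so pick $w_j\in S\cap\{w_1,\ldots,w_\ell\}$. Because $S$ is a clique and $X\cup\{w_j\}\subseteq S$, the vertex $w_j$ is adjacent in $H$ to every $x\in X$, which means $F_j=\{w_jx\colon x\in X,\ w_jx\notin E\}\subseteq F$, as required.

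I do not expect a serious obstacle: the whole argument turns on the clique property of minimal separators in chordal graphs together with the trivial observation that $P$ survives in $H$. It is worth remarking that the ``obscure visibility'' hypothesis $|X\setminus N(w_i)|\ge\sqrt{k}$ plays no role whatsoever in soundness; it is needed only later when bounding the branching factor and the depth of the recursion in the complexity analysis of the algorithm.
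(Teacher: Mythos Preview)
Your proof is correct, and it follows a genuinely different route from the paper's own argument. The paper establishes the forward direction via elimination orderings: it takes a minimal ordering $\pi$ producing $H$ and argues by cases on the relative positions of $u$, $v$, the $w_i$'s, and the vertices of $X$ in $\pi$, invoking the Rose--Tarjan--Lueker path characterization of fill edges (Proposition~\ref{pr:orderFill}) to force either $uv\in F$ or, for the least $i$ with $\pi^{-1}(w_i)>\pi^{-1}(u)$, all edges $w_ix$ with $x\in X$ into $F$. Your argument instead goes through the clique property of minimal separators in chordal graphs: a minimal $u,v$-separator $S$ in $H$ is a clique, must contain all of $X$ (common neighbors of $u$ and $v$) and must hit the interior of $P$, which immediately gives a $w_j$ adjacent in $H$ to every vertex of $X$. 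This is cleaner and avoids the three-way case split entirely; the paper's approach, on the other hand, stays within the elimination-ordering framework already set up in Section~\ref{sec:prelim} and so needs no additional facts about separators. Your closing remark that the $\sqrt{k}$ threshold is irrelevant to soundness is also accurate and worth stating.
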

\begin{proof}
If for some   $i\in \{0,\dots, \ell\}$, $(G_i , k_i)$  is a YES instance, then $G$ can be turned into a chordal graph by adding at most $k_i+ |F_i|=k$ edges, and thus 
$(G,k)$ is a YES instance. 
%


Let $(G,k)$ be a YES instance, and 
let $F\subseteq [V]^2$ be such that graph $H = (V,E \cup F)$ is chordal and $|F| \leq k$.
By Proposition~\ref{char:ordering}, 
 there exists an ordering $\pi$ of $V$, such that the Elimination Game
algorithm on $G$ and $\pi$ outputs $H$. 
Without loss of generality, we can assume that $\pi^{-1}(u) < \pi^{-1}(v)$.
If for some $x \in X$, $\pi^{-1}(x) <  \pi^{-1}(u)$, then by 
Proposition~\ref{pr:orderFill}, we have that  $uv \in F$.  Also  by Proposition~\ref{pr:orderFill}, if  
 $\pi^{-1}(w_i) <  \pi^{-1}(u) $ 
for each  $i\in \{1,\dots, \ell\}$, then again $uv \in F$. In both  cases $(G_0, k_0)$ is a YES instance.

The only remaining case is when  $ \pi^{-1}(u) <\pi^{-1}(x) $ for all $x \in X$, and there  is at least one vertex of $P$ placed after $u$  in  ordering $\pi$.   
Let   $i\geq 1$ be the smallest index such 
that $\pi^{-1}(u) < \pi^{-1}(w_i)$. Thus for every $x\in X$,  in the path $xuw_1,w_2, \ldots, w_{i}$ all
internal vertices   are ordered by $\pi$ before $x$ and $w_i$. By Proposition~\ref{pr:orderFill}, this imply that $w_i$ is adjacent to all vertices of $X$, and hence $(G_i, k_i)$ is a YES instance.   
%
\end{proof}

The following lemma shows that every branching step of Rule~\ref{br:rule1} can be performed in polynomial time.

\begin{lemma}\label{lem:runningTime} Let $(G,k)$ be an instance of $\textsc{Minimum Fill-in}$.
It can be identified in time $\cO(n^4)$ if  there is a pair $u,v\in V(G)$ satisfying  conditions of Rule~\ref{br:rule1}.  Moreover, if conditions of Rule~\ref{br:rule1} hold, then  a  
 pair  $u,v$ of two nonadjacent vertices   and a chordless   $uv$-path $P$ such that visibility of $N(u)\cap N(v)$ from $P$ is obscured,  can   be found in time $\cO(n^4)$. 
%
%
%

\end{lemma}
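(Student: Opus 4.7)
The plan is to iterate over all $\cO(n^2)$ unordered pairs $\{u,v\}$ of nonadjacent vertices and, for each pair, either produce a chordless $uv$-path witnessing obscure visibility or certify that none exists. The key reduction is that an internal vertex $w_i$ of a chordless path is ``obscuring'' if and only if it belongs to the set
$W_{uv} = \{ w \in V \sm \{u,v\} : |X \sm N(w)| \geq \sqrt{k}\}$, where $X = N(u) \cap N(v)$. Computing $X$ takes $\cO(n)$ time, and computing $W_{uv}$ takes $\cO(n^2)$ time (test each vertex against $X$).

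The heart of the argument is the classical fact that any shortest path between two vertices in an unweighted graph is automatically induced: a chord between nonconsecutive path vertices would yield a strictly shorter path, a contradiction. Hence I run a BFS from $u$ to $v$ inside the induced subgraph $G[W_{uv} \cup \{u,v\}]$. If BFS finds a shortest $uv$-path $P$, then $P$ is chordless in $G[W_{uv} \cup \{u,v\}]$ by the shortest-path argument, and therefore chordless in $G$ as well, since induced subgraphs preserve all edges among their vertices. By construction every internal vertex of $P$ lies in $W_{uv}$, so $P$ witnesses that visibility of $X$ from $P$ is obscured.

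Conversely, if there exists any chordless $uv$-path $P'$ in $G$ with obscured visibility of $X$, then every internal vertex of $P'$ belongs to $W_{uv}$, so $P'$ is a $uv$-path in $G[W_{uv} \cup \{u,v\}]$ and BFS will succeed (returning some, not necessarily the same, shortest and therefore chordless path). Thus BFS correctly decides, for the pair $(u,v)$, whether a suitable path exists, and simultaneously outputs one when it does.

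BFS runs in $\cO(n+m) = \cO(n^2)$ time, and computing $W_{uv}$ also costs $\cO(n^2)$, so the work per pair is $\cO(n^2)$. Iterating over all $\cO(n^2)$ nonadjacent pairs yields the claimed $\cO(n^4)$ bound. There is no real obstacle here: the proof just combines two elementary observations, namely that shortest paths are induced and that the obscuring condition depends only on each internal vertex individually, to reduce the search to a polynomial number of BFS computations.
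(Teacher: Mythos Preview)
Your proof is correct and follows essentially the same approach as the paper: both iterate over nonadjacent pairs $u,v$, compute the set $W_{uv}$ of ``obscuring'' vertices, and take a shortest $uv$-path in $G[W_{uv}\cup\{u,v\}]$, using the fact that shortest paths are chordless. Your write-up is in fact slightly more explicit about the per-pair cost and the converse direction, but the argument is the same.
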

\begin{proof}
For each  pair of nonadjacent vertices $u,v$, we compute  $X = N(u) \cap N(v)$. We compute the set of all vertices  $W\subseteq V(G)\sm  \{u,v\}$,   such that every vertex of $W$ is nonadjacent to at least $\sqrt{k}$ vertices of $X$. 
Then conditions of Rule~\ref{br:rule1} do not hold for $u$ and $v$ if 
  in the subgraph $G_{uv}$ induced by $W\cup  \{u,v\}$,   $u$ and $v$ are in different connected components.  
If $u$ and $v$ are in the same connected component of $G_{uv}$, then a  shortest (in  $G_{uv}$) $uv$-path $P$ is a chordless path  and the visibility of $X$ from $P$ is obscured. Clearly, all these procedures can be performed in  time $\cO(n^4)$.
\end{proof}

We say that instance $(G,k)$ is   \emph{non-reducible} if conditions of Rule~\ref{br:rule1} do not hold. Thus for each pair of vertices $u,v$ of non-reducible graph $G$, there is no $uv$-path with   obscure visibility of $N(u)\cap N(v)$.

\begin{lemma} \label{lem:amountof_reduced}
Let $t(n,k)$ be the maximum number of non-reducible problem instances 
  resulting  from recursive application of   Rule \ref{br:rule1}  starting from instance $(G,k)$ with $|V(G)|=n$.
Then 
 $t(n,k)=n^{\cO(\sqrt{k})}$
 and all generated non-reducible instances can be enumerated within the 
same time bound. 
\end{lemma}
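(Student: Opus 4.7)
The plan is to analyze the recursion produced by repeated application of Rule~\ref{br:rule1} as a search tree whose leaves are exactly the non-reducible instances, and then to bound the number of these leaves by $n^{\cO(\sqrt{k})}$. At every internal node with current parameter $k'$, Rule~\ref{br:rule1} yields one \emph{cheap} child $(G_0,k'-1)$ and at most $\ell\le n-2$ \emph{expensive} children $(G_i,k_i)$ with $k_i\le k'-\sqrt{k'}$, since the obscurity condition forces $|F_i|=|X\sm N(w_i)|\ge\sqrt{k'}$. Note that the threshold $\sqrt{k}$ in the definition of obscurity is read against the \emph{current} parameter of the instance, not the original one, so an expensive branch drops the parameter only by $\sqrt{k^{(i)}}$ at its own node.

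The next step is to prove that every root-to-leaf path contains at most $2\sqrt{k}$ expensive steps. Letting $k^{(0)}=k, k^{(1)}, k^{(2)},\ldots$ be the parameter values just before the successive expensive steps on the path, we have $k^{(i+1)}\le k^{(i)}-\sqrt{k^{(i)}}$, and the identity
\[
\bigl(\sqrt{k^{(i)}}-\tfrac12\bigr)^{2}=k^{(i)}-\sqrt{k^{(i)}}+\tfrac14\ge k^{(i+1)}
\]
yields $\sqrt{k^{(i+1)}}\le\sqrt{k^{(i)}}-\tfrac12$. Iterating gives $\sqrt{k^{(m)}}\le\sqrt{k}-m/2$, so $m\le 2\sqrt{k}$.

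Now I would count paths. A path is a sequence of choices (cheap, or which of at most $n$ expensive children), and since every step drops the parameter by at least $1$, its length is at most $k$. Describing a path by the positions of its $b\le 2\sqrt{k}$ expensive steps and the expensive child chosen at each of them gives
\[
t(n,k)\;\le\;(k+1)\sum_{b=0}^{2\sqrt{k}}\binom{k}{b}n^{b}\;\le\;(k+1)(2\sqrt{k}+1)(nk)^{2\sqrt{k}}.
\]
We may assume $k\le\binom{n}{2}$ (else $(G,k)$ is trivially a YES instance), so $k=\cO(n^{2})$ and $(nk)^{2\sqrt{k}}=n^{\cO(\sqrt{k})}$. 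Enumeration within the same bound then follows from Lemma~\ref{lem:runningTime}: each branching step costs $\cO(n^{4})$ and the tree has at most $2\,t(n,k)$ nodes, since $\ell\ge 1$ ensures every internal node has at least two children.

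The main obstacle is choosing the right branching measure. A naive $\sqrt{k}$ measure does not shrink enough on the cheap branch to close an induction, and the expensive branch only chops off $\sqrt{k^{(i)}}$ (not $\sqrt{k}$) from the current parameter, so the usual one-recurrence analysis is inadequate. The potential-function argument above is exactly what converts this into a $2\sqrt{k}$ ceiling on the number of expensive steps per path; once that is in hand, the path-counting step is elementary.
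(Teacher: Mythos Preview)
Your argument is correct and more self-contained than the paper's own proof. The paper simply writes down the recurrence
\[
t(n,k)\;\le\;t(n,k-1)+n\cdot t(n,k-\sqrt{k})
\]
and then invokes an external reference (Theorem~8.1 of \cite{JurdzinskiPZ08}) to conclude $t(n,k)=n^{\cO(\sqrt{k})}$, whereas you unfold that black box. Your potential-function step, namely $\sqrt{k^{(i+1)}}\le\sqrt{k^{(i)}}-\tfrac12$, is precisely the kind of amortized analysis hiding behind that citation; it turns the awkward feature that the expensive drop is only $\sqrt{k^{(i)}}$ (not $\sqrt{k}$) into a clean bound of $2\sqrt{k}$ expensive steps on any root-to-leaf path. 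After that, your path-counting is straightforward and yields the same $n^{\cO(\sqrt{k})}$ bound. Two small points worth tightening: first, your $k^{(0)}=k$ should really read $k^{(0)}\le k$ (there may be cheap steps before the first expensive one), though this only helps the inequality; second, the termination of branching at $k'\le 0$ (so that path length is genuinely $\le k+\cO(1)$) is being used implicitly and could be stated. Neither affects the conclusion. What your approach buys is a fully elementary proof with explicit constants; what the paper's approach buys is brevity at the cost of a dependency on an outside lemma.
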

\begin{proof}
Let us assume that we branch on the instances corresponding to a   pair $u,v$ and path $P = uw_1w_2\dots w_\ell v$    such that the visibility of $N(u)\cap N(w)$ is obscure from $P$. 
Then
the value of $t(n,k)$ is at most 
$
\sum_{i=0}^\ell t(n,k_i). 
$
 Here $k_0=1$ and for all $i\geq 1$, $k_i= k-|F_i|\leq k-\sqrt{k}$. Since the number of vertices in $P$ does not exceed $n$, we have that 
$
 t(n,k)\leq t(n,k-1) + n \cdot t(n,k-\sqrt{k}).
$
 By making use of  standard arguments on the amount of leaves in  branching trees (see, for example \cite[Theorem 8.1]{JurdzinskiPZ08})
  it follows  that  $t(n,k)=n^{\cO(\sqrt{k})}$.    By Lemma~\ref{lem:runningTime}, every recursive call of the branching algorithm can be done in time  $\cO(n^4)$, and thus  all  non-reducible instances are  generated in time $\cO(n^{\cO(\sqrt{k})} \cdot n^4)=n^{\cO(\sqrt{k})}$.

\end{proof}

\section{Listing vital potential maximal cliques} \label{sec:pmc}
Let $(G,k)$ be a YES instance of  \textsc{Minimum Fill-in}. It means that $G$ can be turned into a chordal graph $H$ by adding at most $k$ edges.  Every maximal clique in $H$ corresponds to a potential maximal clique of $G$. The observation here is that if a potential maximal clique $\Omega$ needs more than 
$k$ edges to be added to become  a clique, then no solution $H$ can contain $\Omega$ as a maximal clique.  
In Section~\ref{ss:dp} we prove that 
  the only potential maximal cliques that are essential for   a fill-in with at most $k$ edges are the ones that miss at most $k$ edges from a clique.


A potential maximal clique $\Omega$ is \emph{vital}
if the number of edges in $G[\Omega]$ is at least $|\Omega| (|\Omega| -1)/2 -k$.
 In other words, the subgraph induced by vital potential maximal clique   can be turned into 
a complete graph by adding at most $k$ edges. 
In this section we show that all vital potential maximal cliques  of an $n$-vertex 
non-reducible graph  can be enumerated in time $n^{\cO(\sqrt{k})}$. 

We will first show how to enumerate potential maximal cliques  which are, in some sense, almost cliques. This enumeration algorithm will be used as a subroutine to enumerate vital potential maximal cliques. 
A potential maximal clique $\Omega$ is   \emph{quasi-clique}, 
if there is a set $Z \subseteq \Omega$ of size at most $5{\sqrt{k}}$ such that $\Omega \sm Z$ is a clique. 
In particular, if $|\Omega|\leq 5{\sqrt{k}}$, then $\Omega$ is also a quasi-clique. 
The following   lemma gives an algorithm enumerating all quasi-cliqes. 

\begin{lemma}\label{lem:almostcliques}
Let $(G,k)$ be a   problem instance on $n$ vertices. 
Then all quasi-cliques in $G$  can be enumerated within time  $n^{\cO(\sqrt{k})}$.
\end{lemma}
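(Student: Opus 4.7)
The plan is to split on the size of the quasi-clique, producing $n^{\cO(\sqrt{k})}$ candidates in either regime and then verifying each via Proposition~\ref{pr:pmc_rec}.

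\medskip
\noindent\textbf{Small quasi-cliques: $|\Omega|\le 5\sqrt{k}$.} I would enumerate every subset $S\subseteq V$ of size at most $5\sqrt k$, of which there are $\sum_{i\le 5\sqrt k}\binom{n}{i}=n^{\cO(\sqrt{k})}$, and check each with Proposition~\ref{pr:pmc_rec} in $\cO(nm)$ time. This already exhausts all quasi-cliques in this regime, since the trivial choice $Z=\Omega$ witnesses the quasi-clique property.

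\medskip
\noindent\textbf{Large quasi-cliques: $|\Omega| > 5\sqrt k$.} Here the clique part $C=\Omega\setminus Z$ is nonempty and $|Z|\le 5\sqrt k$. I would enumerate the small set $Z$ (at most $n^{\cO(\sqrt k)}$ choices) and a vertex $y\in V\setminus Z$ intended to play the role of a member of $C$. Since $C$ is a clique, $C\setminus\{y\}\subseteq N_G(y)$ and hence $\Omega\setminus\{y\}=Z\cup(C\setminus\{y\})\subseteq Z\cup N_G(y)$. Proposition~\ref{pr:vertexRep} then says that $\Omega=N_G(Y)\cup\{y\}$, where $Y$ is the connected component of $y$ in $G\setminus(\Omega\setminus\{y\})$; equivalently, viewing things inside $G\setminus Z$, the set $Y$ is the $y$-component after deleting the clique $C\setminus\{y\}\subseteq N_G(y)$, and $C\setminus\{y\}$ is precisely the open neighborhood $N_{G\setminus Z}(Y)$.

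It remains, for each $(Z,y)$, to enumerate the candidate sets $Y$. I would accomplish this by additionally guessing a vertex $y'\in V\setminus(Z\cup\{y\})$ playing the role of a vertex outside $\Omega$ and outside $Y$, giving a further factor of $n$. Since $\Omega\setminus\{y\}$ is then a $y,y'$-separator in $G$ whose portion inside $N_G(y)\setminus Z$ is the clique $C\setminus\{y\}$, one can recover $C\setminus\{y\}$ as a structurally distinguished $y,y'$-separator lying in $N_G(y)\setminus Z$, computable via standard max-flow in $G\setminus Z$. The resulting $\Omega=N_G(Y)\cup\{y\}$ is then validated through Proposition~\ref{pr:pmc_rec}.

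\medskip
\noindent\textbf{Main obstacle.} The technical heart is the reconstruction step in the large regime: one must argue that every quasi-clique $\Omega$ is produced by \emph{some} triple $(Z,y,y')$, and that for each triple only polynomially many candidates for $C\setminus\{y\}$ need to be considered. This requires combining Proposition~\ref{pr:vertexRep} with the PMC characterization of Proposition~\ref{pr:pmc_sep} to rule out spurious clique-separators and to guarantee uniqueness (or small multiplicity) of the recovered $Y$. Once this is in place, the total number of candidates is $n^{\cO(\sqrt k)}\cdot n\cdot n\cdot n^{\cO(1)}=n^{\cO(\sqrt k)}$ and each is verified in polynomial time, yielding the claimed bound.
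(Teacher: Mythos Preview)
Your small-case handling is fine, and guessing $Z$ together with a vertex $y\in C=\Omega\setminus Z$ is a natural start for the large case. The gap is the recovery of $C\setminus\{y\}$. You assert that ``standard max-flow'' in $G\setminus Z$ recovers $C\setminus\{y\}$ as a ``structurally distinguished'' $y,y'$-separator inside $N_G(y)\setminus Z$, but max-flow returns a \emph{minimum} cut, and there is no reason for $C\setminus\{y\}$ to be minimum; in a general graph the number of minimal $y,y'$-separators (even clique ones, even those contained in $N_G(y)$) can be exponential. You correctly flag this as the ``main obstacle'' but do not actually resolve it: nothing in your argument bounds the number of candidate separators per triple $(Z,y,y')$ by a polynomial, and no concrete algorithm is given that outputs $C\setminus\{y\}$. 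Propositions~\ref{pr:vertexRep} and~\ref{pr:pmc_sep} tell you what $\Omega$ looks like once $Y$ is known, but they do not help you \emph{find} $Y$ (or equivalently $C\setminus\{y\}$) from $(Z,y,y')$ alone.

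The paper's proof closes exactly this gap with a different idea: after guessing $Z$, it exploits that $X=\Omega\setminus Z$ is a \emph{clique} in $G\setminus Z$, and hence survives into any \emph{minimal triangulation} $H$ of $G\setminus Z$. Depending on how many full components $X$ has in $G\setminus(Z\cup X)$, the clique $X$ is either a minimal separator of $H$, a maximal clique of $H$, or is handled via a maximal clique $K\supseteq X$ of $H$ together with a vertex $y\in Z$ (using Proposition~\ref{pr:vertexRep}). The point is that a chordal graph has at most $n-1$ minimal separators and at most $n$ maximal cliques, all listable in polynomial time; this is what yields only polynomially many candidates for $X$ per choice of $Z$. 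Your approach lacks an analogue of this ``pass to a chordal supergraph of $G\setminus Z$'' step, which is the mechanism that turns an a priori exponential search for the clique part into a polynomial one.
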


\begin{proof}
We will prove  that 
while  a quasi-clique can be very large, it  can be 
 reconstructed in polynomial time from a small set of $ \cO(\sqrt{k})$ vertices.  
Hence  all quasi-cliques can be generated by enumerating   vertex subsets of size $ \cO(\sqrt{k})$.  
Because the amount of vertex subsets of size $ \cO(\sqrt{k})$  is $n^{\cO(\sqrt{k})}$, this will prove the lemma.

Let $\Omega$ be a \pmc \, which is  a {quasi-clique}, and 
let $Z \subseteq \Omega$ be the set of size at most $5{\sqrt{k}}$ such that   $X = \Omega \sm Z$ is a clique. 
Depending on the amount of full components associated to $X$ in $G \sm (Z \cup X)$, we consider three cases: There are at least two full components, there is exactly one, and there is no full component. 
%

 Consider first the case when $X$ has at least two full components, say  $C_1$ and $C_2$. 
  In this case, by Proposition~\ref{pr:full_components},  $X$ is a minimal clique separator of $G \sm Z$. 
Let $H$ be some  \emph{minimal} triangulation of $G \sm Z$.  
By Proposition~\ref{char:pmc}, clique minimal separators remain clique minimal separators  in every minimal triangulation. Therefore,   $X$ is a minimal separator in $H$.  It is well known that every chordal graph has 
  at most $n-1$ minimal separators  and that they  can be enumerated in linear time \cite{ChandranG06}. 
  To enumerate quasi-cliques   we implement the following algorithm. 
  We construct a minimal triangulation $H$ of  $G \sm Z$.  A  minimal triangulation    can be  constructed in time $\cO(nm)$ or 
$\cO(n^{\omega} \log{n})$, where  $\omega < 2.37$ is the exponent of matrix multiplication and   $m$ is the number of edges in $G$ \cite{HeggernesTV05,RoseTL76}. 
For every minimal separator $S$ of $H$, where $G[S]$ is a clique, we check if $S \cup Z$ is a potential maximal 
clique in $G$. This   can be done in $\cO(km)$ time by Proposition~\ref{pr:pmc_rec}. 
 Therefore, in this case, the  time required to enumerate all quasi-cliques of the form $X\cup Z$, up to polynomial multiplicative factor is proportional to the amount of  sets $Z$ of size at most $5\sqrt{k}$.
 The total running time to enumerate quasi-cliques  of this type is  $n^{\cO(\sqrt{k})}$.

\medskip
Now we consider the case  when no full component in $G\sm (Z \cup X)$ associated to $X$. 
It means that  for every connected component $C$ of $G \sm (Z \cup X)$, there is  $x \in X \sm N(C)$. 
By Proposition~\ref{pr:pmc_sep}, $X$ is also a potential maximal clique in $G\sm Z$.
We construct a  minimal triangulation $H$ of $G \sm Z$. By Proposition~\ref{char:pmc}, 
$X$ is also a potential maximal clique in $H$.
By the  classical result of Dirac \cite{Dirac61} chordal graph $H$ contains at most $n$ maximal cliques 
and   all the maximal cliques of $H$  can be enumerated in linear time \cite{BlairP93}. 
For every maximal clique $K$ of $H$ such that $K$ is also a  clique in $G$, we check if $K \cup Z$ is a potential maximal 
clique in $G$, which can be done in $\cO(nm)$ time by Proposition~\ref{pr:pmc_rec}. 
  As in the previous case, the enumeration of all such quasi-cliques   boils down to enumerating sets $Z$, which takes time  $n^{\cO(\sqrt{k})}$.

%

%
\medskip 

Last case, vertex set $X$ has  unique full component $C_r$ in $G \sm (Z \cup X)$ associated to $X$.
Since $\Omega = Z \cup X$, we have that each of the connected components $C_1,C_2,\ldots,C_r$ 
of $G\sm (Z\cup X)$ is also  a connected component of $G \sm \Omega$.
Then  for every $i\in \{1,\dots, r-1\}$, $S_i = N_{G \sm Z}(C_i)$ is a clique minimal separator in $G \sm Z$ because  $S_i \subseteq X$ is a clique, and  $C_i$  together with the component of $G \sm (Z \cup S_i)$
containing $X \sm S_i$,  are full components associated to $S_i$. 
Let $H$ be a minimal triangulation of $G \sm Z$. 
Vertex set $X$ is a clique  in $G \sm Z$ and thus is  a clique  in $H$. 
Let $K$ be a maximal clique of $H$ containing   $X$. 
By Proposition~\ref{char:pmc},    for every $i\in \{1,\dots, r-1\}$, $S_i$ is a minimal separator in $H$. 
By Proposition~\ref{char:pmc},   $G \sm Z$ has no
 fill edges between vertices separated by $S_i$  and thus 
$C_i$ is a connected component of $H \sm K$. 


Because $\Omega$ is a \pmc  \, in $G$,   by Proposition~\ref{pr:pmc_sep}, there is   $y\in \Omega$
such that $y \not\in N_G(C_r)$. Since $C_r$ is a full component for $X$, we have that $y\in Z$. 
Moreover, every connected component $C\neq C_r$ of $G \sm (Z\cup X)$ is also a connected component of 
$H \sm K$. Thus every 
connected component of $H \sm K$  containing a neighbor of  $y$  in $G$ is also a connected component    of $G \sm \Omega$ containing a neighbor of $y$. 
%

Let $B_1, B_2, \dots, B_\ell$ be the set of connected components
in $G \sm (K \cup Z)$ with $y\in N_G(B_i)$. We define 
\[
Y=\bigcup_{1\leq i \leq \ell} B_i  \cup\{y\}.
\]


By Proposition~\ref{pr:vertexRep},  
$\Omega = N_G(Y ) \cup \{y\}$ and in this case potential maximal clique is characterized by $y$ and $Y$.

To summarize, to enumerate all quasi-cliques corresponding to the last case, we do the following. For every set $Z$ of size at most $5\sqrt{k}$, we construct a minimal triangulation $H$ of $G\setminus Z$. The amount of maximal cliques in a chordal graph $H$ is at most $n$, and for every maximal clique $K$ of $H$ and for every 
$y\in Z$, we compute the set $Y$.  We use Propositions~\ref{pr:pmc_rec} to check if $N_G(Y ) \cup \{y\}$ is a potential maximal clique. 
The total running time to enumerate quasi-cliques in this case is bounded, up to polynomial factor, by  the amount of subsets of size $\cO(\sqrt{k})$ in $G$ which is  $n^{\cO(\sqrt{k})}$.
\end{proof}

Now we are ready to prove the result about  vital  potential maximal cliques in non-reducible graphs.

\begin{lemma}\label{lem:essent_non_red}
Let $(G,k)$ be a non-reducible instance of the problem. 
All vital  potential maximal cliques in $G$ can be enumerated within time  $n^{\cO(\sqrt{k})}$, where $n$ is the number of vertices in $G$. 
\end{lemma}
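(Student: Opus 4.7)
My plan is to show that every vital potential maximal clique in a non-reducible graph admits a witness of size $\cO(\sqrt{k})$ from which it can be reconstructed in polynomial time; since there are $n^{\cO(\sqrt{k})}$ possible witnesses, this matches the target running time.

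Fix a vital pmc $\Omega$ and consider its non-edge graph $H_\Omega$ on vertex set $\Omega$, which has at most $k$ edges. If $H_\Omega$ has vertex cover of size $\leq 5\sqrt{k}$, then $\Omega$ is a quasi-clique and is enumerated by Lemma~\ref{lem:almostcliques}. Otherwise $H_\Omega$ contains a matching $M$ with $|M|>5\sqrt{k}/2$ (using $|VC|\leq 2|M_{\max}|$). For each non-edge $(u_i,v_i)\in M$, Proposition~\ref{pr:pmc_sep} supplies a component $C_{j_i}$ of $G\sm\Omega$ with $\{u_i,v_i\}\subseteq S_{j_i}:=N(C_{j_i})$, and by connectedness of $C_{j_i}$ a chordless $u_iv_i$-path $P_i$ with internal vertices inside $C_{j_i}$. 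Since $G$ is non-reducible, $P_i$ has an internal vertex $w_i\in C_{j_i}$ with $|X_i\sm N(w_i)|<\sqrt{k}$ for $X_i=N_G(u_i)\cap N_G(v_i)$. Because $N(w_i)\cap\Omega\subseteq S_{j_i}$, this yields
\[
|S_{j_i}|\;\geq\;|X_i\cap\Omega|-\sqrt{k}\;\geq\;|\Omega|-2-d(u_i)-d(v_i)-\sqrt{k},
\]
where $d(v):=|\Omega|-1-|N_G(v)\cap\Omega|$. Since $\sum_i(d(u_i)+d(v_i))\leq 2\,\textbf{fill}(\Omega)\leq 2k$ and $|M|>5\sqrt{k}/2$, averaging produces some $i^*$ with $d(u_{i^*})+d(v_{i^*})<4\sqrt{k}/5$, and hence $T:=\Omega\sm S_{j_{i^*}}$ has $|T|=\cO(\sqrt{k})$. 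Thus every non-quasi-clique vital pmc decomposes as $\Omega=S\cup T$, where $S$ is a minimal separator of $G$ and $|T|=\cO(\sqrt{k})$.

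For the enumeration, iterate over all subsets $T\subseteq V(G)$ of size $\cO(\sqrt{k})$; there are $n^{\cO(\sqrt{k})}$ of them. For each $T$, compute a minimal triangulation $H'$ of $G\sm T$ in polynomial time and use as candidates for $S$ both the $\leq n$ minimal separators and the $\leq n$ maximal cliques of $H'$. For each candidate $S'$, test whether $S'\cup T$ is a potential maximal clique of $G$ via Proposition~\ref{pr:pmc_rec} and whether $\textbf{fill}(S'\cup T)\leq k$. I would structure this in three cases, mirroring the proof of Lemma~\ref{lem:almostcliques}, depending on whether $S$ has at least two, exactly one, or no full components associated to it in $G\sm T$; in the last two cases I would invoke Proposition~\ref{pr:vertexRep} to recover $\Omega=N_G(Y)\cup\{y\}$ for a distinguished $y\in T$.

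The main obstacle is the ``exactly one full component'' case: since $S$ is not in general a clique in $G\sm T$ (unlike the clique $X$ appearing in Lemma~\ref{lem:almostcliques}), the set $S$ need not sit inside any maximal clique of $H'$, and identifying the component $Y$ from $T$ alone is delicate. I would resolve this by enumerating, in addition to $T$, a single ``anchor'' vertex $v^*\in V(G)\sm T$ intended to represent the unique full component $C_{j_{i^*}}$ (adding only a factor of $n$), and then deriving $Y$ from the component structure of $G\sm T$ together with $v^*$ and the choice of $y$. Granted this, each candidate is checked in polynomial time, and the total running time is $n^{\cO(\sqrt{k})}\cdot\mathrm{poly}(n)=n^{\cO(\sqrt{k})}$, as required.
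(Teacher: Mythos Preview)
Your structural analysis is essentially correct and in fact implicitly proves the key claim the paper uses: the vertex $w_{i^*}$ you produce satisfies $|\Omega\setminus N(w_{i^*})|=\cO(\sqrt{k})$, since $N(w_{i^*})\cap\Omega\subseteq S_{j_{i^*}}$ and you bound $|\Omega\setminus S_{j_{i^*}}|$. (The paper gets the same conclusion more directly: let $Y\subseteq\Omega$ be the vertices with $H_\Omega$-degree $\geq\sqrt{k}$, so $|Y|\leq 2\sqrt{k}$; any non-edge $uv$ in $\Omega\setminus Y$ already has $d(u)+d(v)<2\sqrt{k}$, so your matching/averaging step is unnecessary.)

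The genuine gap is in the enumeration. You propose to guess $T$ and then recover $S$ as a minimal separator or maximal clique of a \emph{single} minimal triangulation $H'$ of $G\setminus T$. This works in Lemma~\ref{lem:almostcliques} only because there the set $X=\Omega\setminus Z$ is a \emph{clique} in $G\setminus Z$: clique minimal separators are preserved in every minimal triangulation (Proposition~\ref{char:pmc}), and a clique with no full component is itself a potential maximal clique. Your $S$ is not a clique in $G\setminus T$, so neither implication holds. A non-clique minimal separator of $G\setminus T$ need not be a minimal separator of $H'$ at all; different minimal triangulations have different separator families, and there is no reason $H'$ should pick the one containing $S$. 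So already your ``at least two full components'' case fails, not just the ``exactly one'' case you flagged; your anchor-vertex patch does not address this.

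The paper's remedy is to exploit the vertex $w$ rather than the set $T$. Guess $w\in V(G)$ (only $n$ choices), form $H=G_{N(w)}$, and observe that $N(w)\cap\Omega$ is now a clique while $|\Omega\setminus N(w)|\leq 5\sqrt{k}+2$; moreover $\Omega$ is still a potential maximal clique in $H$ because completing $N(w)$ cannot create a full component associated to $\Omega$ (as $N(w)\cap\Omega\subseteq N(C)\subsetneq\Omega$). Hence $\Omega$ is a quasi-clique in $H$, and Lemma~\ref{lem:almostcliques} applied to $H$ enumerates it. This sidesteps the need to locate a non-clique separator inside a single minimal triangulation.
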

\begin{proof}
We start by enumerating all vertex subsets of $G$ of size at most $5\sqrt{k} +2$ and apply Proposition~\ref{pr:pmc_rec} 
to check if each such set is a vital potential maximal clique or not. 

Let $\Omega$ be a vital  potential maximal clique with at least $5\sqrt{k} +3$ vertices  
and let $Y \subseteq \Omega$ be the set of vertices of $\Omega$ such that each vertex of $ Y$ is adjacent in $G$ 
to at most $|\Omega|-1 -\sqrt{k}$ vertices of $\Omega$. To turn $\Omega$ into a complete graph, 
for each vertex  $v\in Y$,  we have to add at least $\sqrt{k}$ fill  edges incident to $v$. Hence $|Y| \leq 2\sqrt{k}$.
If  $\Omega \sm Y$ is a clique, then $\Omega$ is a quasi-clique. 
By Lemma~\ref{lem:almostcliques}, all quasi-cliques can be enumerated in time $n^{\cO(\sqrt{k})}$.

If  $\Omega \sm Y$ is not a clique, there is at least one pair of non-adjacent vertices  $u,v \in \Omega \sm Y$.
By Proposition~\ref{pr:pmc_sep}, there is a connected component $C$ of $G \sm \Omega$ such that $u,v \in N(C)$. 

\begin{claim}\label{claim:comp}
There is   $w\in C$ such that $ |\Omega\sm N(w)|\leq 5 \sqrt{k}+2$.
\end{claim}
\begin{proof}
Targeting towards a contradiction, we 
   assume that  the claim  does not hold. 
 We define the following subsets of $\Omega \sm Y$.  

\begin{figure}
\begin{center}
 \includegraphics[width=3.5cm]{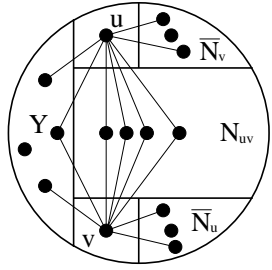}
\end{center}
\caption{Partitioning of potential maximal clique $\Omega$   into   sets
$\overline{N}_u , \overline{N}_v, N_{uv} , \{u\} , \{v\} ,$ and $Y$. }  
\label{fig:pmc_part}
\end{figure} 
 \begin{itemize}
\item  $\overline{N}_u  \subseteq \Omega\sm Y$ is the set of vertices which are not adjacent to $u$, 
\item  $\overline{N}_v \subseteq \Omega\sm  Y$ is the set of vertices which are not adjacent to $v$, and 
\item $N_{uv} = \Omega\sm( Y\cup \overline{N}_u \cup\overline{N}_v )$ is the set of vertices adjacent to $u$ and to $v$.  
\end{itemize}
 See   Fig.~\ref{fig:pmc_part} for an illustration.
Let us note that  
\[
\Omega= \overline{N}_u \cup \overline{N}_v \cup N_{uv} \cup \{u\} \cup \{v\} \cup  Y.
\]
Since $u,v \not\in Y$, we have that  $\max\{|\overline{N}_u|, |\overline{N}_v|\}\leq \sqrt{k}$. 

We claim that $|N_{uv}| \leq  \sqrt{k}$. Targeting towards a contradiction, let us assume that  $|N_{uv}| >\sqrt{k}$.
By our assumption, every vertex $w \in C$ is not adjacent to at least $5\sqrt{k} +2$ vertices of $\Omega$.
Since $|Y \cup \overline{N}_u \cup \overline{N}_v \cup \{u\}  \cup \{v\}| \leq 2\sqrt{k}+\sqrt{k}+\sqrt{k}+2=4\sqrt{k}+2$, we have that 
each vertex of $C$ is nonadjacent to at least $\sqrt{k}$ vertices of $N_{uv}$. 
We take a shortest $uv$-path $P$ with all internal vertices in $C$. 
Because $C$ is a connected component and $u,v\in N(C)$, such a path exists. 
Every internal vertex of $P$ is nonadjacent to at least $\sqrt{k}$ vertices of $N_{uv}\subseteq N(u)\cap N(v)$, and thus the visibility of  $N_{uv}$ from $P$ is obscured. 
But this is a contradiction to the assumption that $(G,k)$ is non-reducible.  Hence $|N_{uv}| \leq \sqrt{k}$.
 
Thus if the claim does not hold, we have that  
\[
|\Omega|= |\overline{N}_u \cup \overline{N}_v \cup N_{uv} \cup \{u\} \cup \{v\} \cup  Y| \leq 5\sqrt{k}+2,
\]
but this contradicts to the assumption that $|\Omega|\geq \ 5\sqrt{k}+3$. This concludes the proof of the claim.  
\end{proof}

%
%
  
\medskip
We have shown that for every 
 vital  potential maximal clique $\Omega$  of size  at least $5\sqrt{k} +3$, there is a connected component $C$  and 
 $w \in C$  such that $|\Omega \sm N(w)|\leq 5\sqrt{k}+2$.
Let $H$ be the graph obtained from $G$ by completing $N(w)$ into a clique. 
The graph $H[\Omega]$ consist of a clique plus at most $5\sqrt{k}+2$ vertices.
We want  to show that  $\Omega$ is a quasi-clique in $H$, by arguing  that $\Omega$ is a potential maximal clique in $H$.
 Vertex set $\Omega$ is a potential maximal clique in $G$, and thus by Proposition~\ref{pr:pmc_sep}, 
there is no full component associated to $\Omega$ in $G \sm \Omega$. 
Because 
$N(w)\cap  \Omega \subseteq N(C)\subset \Omega$, 
there is no full  component associated to $\Omega$ in $H$. 
  We   use Proposition~\ref{pr:pmc_sep} to show that $\Omega$ is a potential maximal clique in $H$ as well. 
Hence $\Omega$ is a quasi-clique in $H$.

%
\medskip

To conclude, we use the following strategy to enumerate all vital potential maximal cliques. 
We enumerate first all quasi-cliques in $G$ in time  $n^{\cO(\sqrt{k})}$ by making use of Lemma~\ref{lem:almostcliques},  and for each quasi-clique we  use Proposition~\ref{pr:pmc_rec} to check if it is a vital potential maximal clique. We also try all vertex subsets of size at most $5\sqrt{k}+2$ and check if each of such sets is a vital potential maximal clique.  All vital potential maximal cliques which are not enumerated prior to this moment should satisfy the condition   of the claim.  As we have shown, each such vital potential maximal clique is a quasi-clique in the graph $H$ obtained from $G$ by selecting some vertex $w$ and turning  $N_G(w)$ into clique. Thus 
 for every vertex $w$ of $G$, we construct graph $H$  
and then use Lemma~\ref{lem:almostcliques} to enumerate all quasi-cliques in $H$. For each quasi-clique of $H$, 
we use Proposition~\ref{pr:pmc_rec} to check if it is a vital potential maximal clique in $G$. The total running time of this procedure is   $n^{\cO(\sqrt{k})}$.
\end{proof}

\section{Exploring   remaining solution space}\label{ss:dp}

For an instance $(G,k)$ of  \textsc{Minimum Fill-in}, let $\Pi_k$ be the set of all vital potential maximal cliques. 
 In this section, we give an algorithm of running time  $\cO(nm |\Pi_k|)$, where $n$ is the number of vertices and $m$ the number of edges in $G$. The algorithm receives $(G,k)$ and $\Pi_k$ as an input, and decides   if 
 $(G,k)$ is a YES instance. 
 The  algorithm is a modification of the algorithm from  \cite{FominKTV08}. The only difference is that the algorithm from \cite{FominKTV08} computes   an  optimum triangulation from  the set of all potential maximal cliques while here we have to work only with vital potential maximal cliques. 
For reader's convenience we provide the full proof, but first we need the following lemma.

\begin{lemma}\label{pr:pmc_SC}
Let $S$ be a minimal separator in $G$ and 
let $C$ be a full connected component of $G \sm S$ associated to $S$. 
Then every minimal triangulation $H$ of $G_S$ contains a maximal clique $K$ 
such that $S \subset K \subseteq S \cup C$.
\end{lemma}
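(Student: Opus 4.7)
The plan is to reduce the statement, via Proposition~\ref{char:pmc}, to a purely chordal-graph claim about the induced subgraph $H[S\cup C]$, and then prove that claim by induction using Dirac's simplicial vertex theorem. First I would observe that $G_S$ differs from $G$ only by edges inside $S$, so the connected components of $G_S\setminus S$ coincide with those of $G\setminus S$; in particular $S$ is still a minimal separator of $G_S$ and $C$ is still a full component of $G_S\setminus S$. Applying Proposition~\ref{char:pmc} with $X=S$ then yields that $H[S\cup C]$ is a minimal triangulation of $G_S[S\cup C]$ and in particular a chordal graph in which $S$ is a clique, $C=V(H[S\cup C])\sm S$ is nonempty and connected, and every $s\in S$ has a neighbor in $C$ in $H$ (because $C$ is full). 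A second consequence of Proposition~\ref{char:pmc}, which I will use at the end, is that no fill edge of $H$ crosses between distinct components of $G_S\sm S$, so the component of $H\sm S$ containing any $v\in C$ is exactly $C$.

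Given a vertex $v\in C$ adjacent in $H$ to all of $S$, the set $S\cup\{v\}$ is a clique of $H$ and hence extends to a maximal clique $K$ of $H$. Since the component of $H\sm S$ containing $v$ is $C$, we have $N_H[v]\subseteq S\cup C$, so $K\subseteq N_H[v]\subseteq S\cup C$, while $K\supseteq S\cup\{v\}\supsetneq S$; this is precisely the maximal clique required by the lemma. Everything therefore reduces to the following key claim: in any chordal graph $H'$ with a clique $S$ such that $C':=V(H')\sm S$ is nonempty and connected and every $s\in S$ has a neighbor in $C'$ in $H'$, some vertex of $C'$ is adjacent in $H'$ to every vertex of $S$.

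I would prove this key claim by induction on $|C'|$. The base case $|C'|=1$ is immediate, since the unique vertex of $C'$ must be a neighbor of every $s\in S$. For the inductive step, if $H'$ is complete the conclusion is trivial; otherwise Dirac's theorem gives two nonadjacent simplicial vertices $u_1,u_2$ of $H'$, and since $S$ is a clique at most one of them can lie in $S$. If one does, say $u_1\in S$, then the clique $N_{H'}(u_1)$ contains $S\sm\{u_1\}$ together with some $w\in C'$ (by the ``full'' hypothesis), whence $w$ is adjacent to all of $S\sm\{u_1\}$ and to $u_1$, hence to all of $S$. Otherwise both $u_1,u_2\in C'$, and I would try to apply induction after deleting $u_i$ for some $i\in\{1,2\}$. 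Connectivity of $C'\sm\{u_i\}$ is preserved because $N_{H'}(u_i)\cap C'$ is a clique by simpliciality of $u_i$, so removing $u_i$ cannot split $C'$. The main obstacle, and the place the argument actually has to do work, is the ``every $s$ has a $C'$-neighbor'' hypothesis: it fails after deleting $u_i$ exactly when some $s\in S$ has $u_i$ as its unique $C'$-neighbor. If both deletions were blocked we would have $s^*,s^{**}\in S$ with $N_{H'}(s^*)\cap C'=\{u_1\}$ and $N_{H'}(s^{**})\cap C'=\{u_2\}$; but any $w\in N_{H'}(u_1)\cap C'$ (which exists because $|C'|\geq 2$ and $C'$ is connected, so $u_1$ must have a $C'$-neighbor) sits together with $s^*$ in the clique $N_{H'}(u_1)$, forcing $s^*w\in E(H')$ with $w\in C'\sm\{u_1\}$ and contradicting the assumed uniqueness of $u_1$ as a $C'$-neighbor of $s^*$. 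Hence at least one of $u_1,u_2$ may be deleted, the induction hypothesis produces the desired $v\in C'$, and the lemma follows.
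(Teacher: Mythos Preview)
Your proof is correct but takes a genuinely different route from the paper's. Both arguments begin with Proposition~\ref{char:pmc} to reduce to the chordal graph $H[S\cup C]$, and both must then show that the clique $S$ is not maximal there, i.e.\ that some $v\in C$ is adjacent in $H$ to all of $S$. The paper dispatches this in one step via the potential-maximal-clique characterization: since $C$ is a full component of $G_S[S\cup C]\setminus S$, Proposition~\ref{pr:pmc_sep} forbids $S$ from being a potential maximal clique of $G_S[S\cup C]$, hence $S$ cannot be a maximal clique of its minimal triangulation $H[S\cup C]$, and any maximal clique $K\supseteq S$ there must satisfy $S\subsetneq K\subseteq S\cup C$. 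You instead prove the existence of such a $v$ by an elementary, self-contained induction on $|C'|$ using Dirac's two-simplicial-vertices theorem, never touching the PMC machinery. Your route is longer but portable to any setting with a chordal graph, a clique, and a full connected complement; the paper's route is essentially a two-line application of structural results already set up in Section~\ref{sec:prelim}.
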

\begin{proof}
By Proposition~\ref{char:pmc},  $H$ is a minimal triangulation of $G_S$ if and only if 
$H[S \cup C]$ is a minimal triangulation of $G_S[S \cup C]$. 
Because  $S$ is a clique $G_S[S]$ in, $S$ is a subset of some maximal clique $K$ of 
 $H[S \cup C]$. 
By definition $K$ is a potential maximal clique in $G_S[S \cup C]$, and by Proposition~\ref{pr:pmc_sep} 
$K$ is a potential maximal clique in $G$. 
Since $G_S[S \cup C] \sm S$ has a full component associated to $S$, we have that  by Proposition~\ref{pr:pmc_sep},
$S$ is not a potential maximal clique in $G_S[S \cup C]$ and thus $S \subset K$. 
\end{proof}


\begin{lemma}\label{lem:pmc_fill} Given a set of all  vital potential maximal cliques 
 $\Pi_k$ of $G$,  it can be decided in time $\cO(nm|\Pi_k|)$ if   $(G,k)$  is a YES instance of \textsc{Minimum Fill-in}. 
\end{lemma}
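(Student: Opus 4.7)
The plan is to design a dynamic programming algorithm over pairs $(S,C)$, where $S$ is a minimal separator of $G$ and $C$ is a full connected component of $G\sm S$ associated to $S$. For each such pair let $R(S,C)$ denote the minimum number of fill edges needed to turn $G_{S}[S\cup C]$ into a chordal graph, i.e., the extra edges required after $S$ has already been completed into a clique. The answer to the whole instance is
\[
\mathrm{opt}(G)=\min_{\Omega}\Bigl[\textbf{fill}_G(\Omega)+\sum_{i=1}^{r}R(S_i,C_i)\Bigr],
\]
where $\Omega$ ranges over vital \pmc s of $G$, the $C_1,\dots,C_r$ are the connected components of $G\sm\Omega$, and $S_i=N_G(C_i)$; the instance $(G,k)$ is YES iff $\mathrm{opt}(G)\le k$.

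The recursion for $R(S,C)$ applies the same decomposition inside a subproblem:
\[
R(S,C)=\min_{\Omega}\Bigl[\textbf{fill}_G(\Omega)-\textbf{fill}_G(S)+\sum_{i=1}^{r}R(S_i,C_i)\Bigr],
\]
where $\Omega$ ranges over vital \pmc s with $S\subset\Omega\subseteq S\cup C$, the $C_i$ are the connected components of $G[(S\cup C)\sm\Omega]$, and $S_i=N_G(C_i)\subseteq\Omega$. Subtracting $\textbf{fill}_G(S)$ corrects for the non-edges inside $S$ that were already paid for when $S$ was completed into a clique in $G_S$. Since every recursive call satisfies $|S_i\cup C_i|<|S\cup C|$, the table is evaluated bottom-up in order of increasing $|S\cup C|$.

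Correctness combines Propositions~\ref{char:pmc}, \ref{pr:pmc_sep}, and Lemma~\ref{pr:pmc_SC}. For any minimal triangulation $H$ of $G_S$, Lemma~\ref{pr:pmc_SC} supplies a maximal clique $K$ with $S\subset K\subseteq S\cup C$, which plays the role of the minimizing $\Omega$; Proposition~\ref{char:pmc} then factors $H$ along $K$ and the minimal separators $N(C_i)\subset K$ identified by Proposition~\ref{pr:pmc_sep} into independently triangulated subproblems exactly matching the recursion. Restricting the minimum to \emph{vital} \pmc s is legitimate: in any triangulation $H$ with at most $k$ fill edges, every maximal clique $\Omega$ of $H$ has all non-edges of $G[\Omega]$ among the fill edges of $H$, so $\textbf{fill}_G(\Omega)\le k$ and $\Omega$ is vital. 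Hence an optimum of value at most $k$ is witnessed entirely by vital \pmc s, and the DP finds it.

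For the running time, Proposition~\ref{pr:pmc_sep} implies that every minimal separator appearing as some $S\subset\Omega$ in the recursion equals $N_G(C_i)$ for some component $C_i$ of $G\sm\Omega$; consequently each vital \pmc{} $\Omega\in\Pi_k$ triggers at most $n$ update operations (one per $C_i$), giving a total of $O(n|\Pi_k|)$ pair/\pmc{} evaluations. For each $\Omega$ the shared data needed by all of its updates --- $\textbf{fill}_G(\Omega)$, the components of $G\sm\Omega$, the separators $S_i$, and the relevant table lookups --- can be prepared in $O(nm)$ time, so the overall running time is $O(nm|\Pi_k|)$. The main obstacle I anticipate is this last piece of bookkeeping: making sure that the $(S,C)$ pairs the DP really needs can all be extracted from $\Pi_k$ alone (as neighborhoods of components of $G\sm\Omega$ for $\Omega\in\Pi_k$), without a separate, potentially exponential, enumeration of all minimal separators of $G$.
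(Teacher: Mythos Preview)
Your proposal is correct and follows essentially the same approach as the paper: a bottom-up dynamic program over blocks $(S,C)$, with the recursion justified by Proposition~\ref{char:pmc} and Lemma~\ref{pr:pmc_SC}, the restriction to vital \pmc s justified by the fill bound, and the running time obtained by counting good triples $(S,C,\Omega)$ with $S\subset\Omega\subseteq S\cup C$ as at most $n$ per $\Omega\in\Pi_k$. Your explicit correction term $-\,\textbf{fill}_G(S)$ is exactly the paper's $\textbf{fill}_F(\Omega')$ for $F=G_S[S\cup C]$, and your anticipated bookkeeping concern is precisely what the paper handles by defining $\Delta_k$ as the set of separators arising as $N_G(C)$ for components $C$ of $G\setminus\Omega$, $\Omega\in\Pi_k$.
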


\begin{proof}

Let $\textbf{mfi}(G)$ be the minimum number of fill edges needed to triangulate $G$.
Let us remind that by $ \textbf{fill}_G(\Omega)$ we denote the number of non-edges in $G[\Omega]$   and by $G_\Omega$ the graph obtained from $G$ by completing $\Omega$ into a clique. 
If $\textbf{mfi}(G)\leq k$,  
then by  Proposition~\ref{char:pmc},  we have that 
\begin{equation}\label{eq:root}
 \textbf{mfi}(G) = \min_{\Omega \in \Pi_k}[  \textbf{fill}_G(\Omega) + \sum_{C \text{ is a component of }G \sm \Omega} \textbf{mfi}(G_{\Omega}[C \cup N_G(C)])].
\end{equation}
Formula \eqref{eq:root} can be used to compute $ \textbf{mfi}(G)$, however by making use of this formula we are not able to obtained the claimed running time. 
To implement the algorithm in  time $\cO(nm |\Pi_k|)$, we  compute  
$\textbf{mfi}(G_{\Omega}[C \cup N_G(C)])]$ by dynamic programming.

By Proposition~\ref{pr:pmc_sep},  for every  connected component $C$ of $G \sm \Omega$, where $\Omega\in \Pi_k$,  $S = N_G(C)\subset \Omega$ is a minimal separator. We define the set $\Delta_k$ as the set of all minimal separators $S$, such that $S=N(C)$ for some 
connected component $C$ in $G \sm \Omega$ for some  $\Omega\in \Pi_k$. Since for every   $\Omega\in \Pi_k$ the number of components in $G \sm \Omega$ is at most $n$, we have that $|\Delta_k| \leq n |\Pi_k|$.

For $S\in \Delta_k$ and a full connected component $C$ of $G\sm S$ associated to $S$. 
We define $\Pi_{S,C}$ as the set of potential maximal cliques  $\Omega \in \Pi_k$ such that 
   $S \subset \Omega \subseteq S \cup C$.
  The triple  $(S, C ,\Omega)$ was called a good triple in \cite{FominKTV08}.

For 
every $\Omega\in \Pi_k$, connected component $C$ of $G\sm \Omega$, and $S=N(C)$, we 
 compute $\textbf{mfi}(F)$, where $F= G_{\Omega}[C \cup S]$.
 We start dynamic programming by  computing the values for all  sets
 $(S,C)$ such that $\Omega=C\cup S$ is an inclusion-minimal potential maximal clique. In this case we put 
  $\textbf{mfi}(F)=\textbf{fill}(C\cup S)$.
Observe that $G_{S}[C \cup S] =  G_{\Omega}[C \cup S]$. Hence  by Lemma~\ref{pr:pmc_SC},
for every minimal triangulation  $H$   of $G_S$, there exists a potential maximal clique $\Omega$ in $G$ 
such that $\Omega$ is a maximal clique in $H$ and $S \subset \Omega \subseteq S \cup C$.
Thus $\Omega \in \Pi_{S,C}$. Using this observation, we write the following formula for dynamic programming. 
  
\begin{equation}\label{eq:recursion}
\textbf{mfi}(F) = \min_{\Omega' \in \Pi_{S,C}}[ \textbf{fill}_{F}(\Omega') + \sum_{C' \text{ is a component of } F \sm \Omega'} \textbf{mfi}(F_{\Omega'}[C' \cup N(C')])].
\end{equation}

The fact   $S \subset \Omega'$  ensures that the solution 
in \eqref{eq:recursion} can be reconstructed from instances  with 
$|S \cup C|$ of smaller sizes.  
By  \eqref{eq:root} and \eqref{eq:recursion} we can decide if there exists a triangulation 
of $G$ using at most $k$ fill edges, and to construct such a triangulation. It remains to argue for the running time. 
%

\medskip

Finding connected components in $G \sm \Omega$ and computing $\textbf{fill}(\Omega)$ 
can easily been done in $\cO(n+m)$ time. 
Furthermore,  \eqref{eq:root} is applied $|\Pi_k|$ times  in total.
The running time of dynamic programming using \eqref{eq:recursion} is proportional to the amount of states of dynamic programming, which is 
\[
\sum_{S \in \Delta_k} \sum_{C \in G \sm S} |\Pi_{S,C}| 
\]
 The graph $G \sm \Omega$ contains at most $n$ connected components and thus for every minimal separator, each potential maximal clique is counted at   most $n$ times, and thus the amount of the elements in the sum does not exceed  $ n|\Pi_k|$.
%
The total running time is $\cO(nm|\Pi_k|)$.
\end{proof}

\section{Putting things together}\label{sec:puttingtogether}
Now we are in the position to prove the main result of this paper. 
\begin{theorem}\label{thm:mainthm} The \textsc{Minimum Fill-in} problem is solvable in time 
$\cO(2^{\cO(\sqrt{k}\log{k})} +k^2 nm)$.
\end{theorem}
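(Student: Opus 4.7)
The proof is essentially an assembly of the four ingredients already developed in the paper, so the plan is to verify that the running times telescope correctly and that the overall algorithm is correct.

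First, on input $(G,k)$ I would apply the kernelization algorithm of Natanzon, Shamir, and Sharan (Proposition~\ref{prop:polykernel}) to produce, in time $\cO(k^2 nm)$, an equivalent instance $(G',k')$ with $k'\leq k$ and $|V(G')|=\cO(k^2)$. From this point on the input size is $n'=\cO(k^2)$, so any quantity of the form $(n')^{\cO(\sqrt{k})}$ is at most $k^{\cO(\sqrt{k})} = 2^{\cO(\sqrt{k}\log k)}$.

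Next, I would run the branching procedure of Section~\ref{sec:branch} on $(G',k')$: by Lemma~\ref{lem:amountof_reduced}, exhaustive application of Rule~\ref{br:rule1} produces a collection of at most $(n')^{\cO(\sqrt{k})}=2^{\cO(\sqrt{k}\log k)}$ non-reducible instances $(G_j,k_j)$, within the same time bound, and by soundness (Lemma~\ref{lem:sound_branching}) $(G',k')$ is a YES instance iff some $(G_j,k_j)$ is. For each such non-reducible instance I would invoke Lemma~\ref{lem:essent_non_red} to enumerate all vital potential maximal cliques $\Pi_{k_j}$ of $G_j$ in time $(n')^{\cO(\sqrt{k})} = 2^{\cO(\sqrt{k}\log k)}$; note that $|\Pi_{k_j}|$ is bounded by the same quantity. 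Finally, I would feed $(G_j, k_j, \Pi_{k_j})$ to the dynamic programming algorithm of Lemma~\ref{lem:pmc_fill}, which decides the instance in time $\cO(n' m'\, |\Pi_{k_j}|) = \text{poly}(k) \cdot 2^{\cO(\sqrt{k}\log k)} = 2^{\cO(\sqrt{k}\log k)}$.

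Correctness follows from three observations chained together: kernelization preserves YES/NO status; the branching rule is sound (Lemma~\ref{lem:sound_branching}), so the original instance is YES iff some generated non-reducible instance is YES; and Lemma~\ref{lem:pmc_fill} correctly decides each non-reducible instance, provided the set of vital potential maximal cliques it receives is complete, which is exactly what Lemma~\ref{lem:essent_non_red} guarantees, together with the fact (established in Section~\ref{ss:dp}) that only vital potential maximal cliques are needed to reach a triangulation of cost at most $k$.

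For the total running time, the kernelization contributes the additive term $\cO(k^2 nm)$; after that, every subsequent step runs on an instance of size $\cO(k^2)$. The number of non-reducible branches times the cost per branch (enumeration of vital PMCs plus dynamic programming) is $2^{\cO(\sqrt{k}\log k)} \cdot 2^{\cO(\sqrt{k}\log k)} = 2^{\cO(\sqrt{k}\log k)}$, yielding the claimed bound $\cO(2^{\cO(\sqrt{k}\log k)}+k^2 nm)$. There is no real obstacle at this stage: all hard combinatorial work was done in Sections~\ref{sec:branch}--\ref{ss:dp}; the only care needed is the bookkeeping check that the polynomial factors in $n'=\cO(k^2)$ coming from Propositions~\ref{pr:pmc_rec} and Lemma~\ref{lem:pmc_fill} (each of order $k^{\cO(1)}$) are absorbed into the $2^{\cO(\sqrt{k}\log k)}$ factor and do not pollute the additive kernelization term.
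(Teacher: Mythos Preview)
Your proposal is correct and follows essentially the same four-step assembly as the paper's own proof: kernelize via Proposition~\ref{prop:polykernel}, branch via Rule~\ref{br:rule1} and Lemmas~\ref{lem:sound_branching}--\ref{lem:amountof_reduced}, enumerate vital potential maximal cliques via Lemma~\ref{lem:essent_non_red}, and solve each instance via Lemma~\ref{lem:pmc_fill}, with the same running-time bookkeeping. The only cosmetic difference is that the paper writes the per-instance cost in Step~C as $\cO(k^6|\Pi_{k_i}|)$ (since $n'=\cO(k^2)$, $m'=\cO(k^4)$) where you write $\text{poly}(k)\cdot 2^{\cO(\sqrt{k}\log k)}$, which is the same thing.
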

\begin{proof}
\smallskip\noindent\textbf{Step A.}
Given instance $(G,k)$ of the  \textsc{Minimum Fill-in} problem, we use Proposition~\ref{prop:polykernel}   to obtain a kernel $(G', k')$ on $\cO(k^2)$ vertices and with $k'\leq k$. Let us note that $(G,k)$ is a YES instance if and only if 
 $(G',k')$ is a YES instance.
This step is performed in time $\cO(k^2 nm)$. 

\smallskip\noindent\textbf{Step B1.}
We use Branching Rule~\ref{br:rule1} on instance $(G', k')$. Since the number of vertices in $G'$ is $\cO(k^2)$, we have that 
by  Lemma~\ref{lem:amountof_reduced}, the result of this procedure is  the set of $(k^2)^{\cO(\sqrt{k})}=2^{\cO(\sqrt{k}\log{k})}$ non-reducible instances $(G_1, k_1), \dots, (G_p, k_p)$. For each $i\in \{1, 2, \dots, p\}$, graph $G_i$ has
$\cO(k^2)$ vertices and $k_i \leq k$. Moreover,  by Lemma~\ref{lem:sound_branching},   $(G',k')$, and thus $(G,k)$, is a YES instance if and only if at least one $(G_i, k_i)$ is a YES instance.  By Lemma~\ref{lem:amountof_reduced}, the running time of this step is $2^{\cO(\sqrt{k}\log{k})}$.

\smallskip\noindent\textbf{Step B2.} For each $i\in \{1, 2, \dots, p\}$, we list all vital potential maximal cliques of graph $G_i$. 
By Lemma~\ref{lem:essent_non_red}, the amount of all vital potential maximal cliques in non-reducible graph $G_i$ is $2^{\cO(\sqrt{k}\log{k})}$ and they can be listed within the same running time. 

\smallskip\noindent\textbf{Step C.} At this step   for each $i\in \{1, 2, \dots, p\}$, we are given instance $(G_i, k_i)$ together with the set $\Pi_{k_i}$ of vital potential maximal cliques of $G_i$ computed in   Step~B2. We use Lemma~\ref{lem:pmc_fill} to solve  the  \textsc{Minimum Fill-in} problem for instance $(G_i, k_i)$ in time $\cO(k^6|\Pi_{k_i}|)=2^{\cO(\sqrt{k}\log{k})}$. If at least one of the instances  $(G_i, k_i)$ is a YES instance, then by Lemma~\ref{lem:sound_branching},   $(G, k)$ is a YES instance. If all instances $(G_i, k_i)$ are  NO instances,   we conclude that $(G,k)$ is a NO instance. 
Since $p=2^{\cO(\sqrt{k}\log{k})}$, we have that Step~C can be performed in time 
$2^{\cO(\sqrt{k}\log{k})}$.
The total running time required to perform all steps of the algorithm is $\cO(2^{\cO(\sqrt{k}\log{k})} +k^2 nm)$.

\end{proof}

Let us remark that  our decision  algorithm can be easily adapted to output the optimum fill-in of size at most $k$.


\section{Applications  to other problems}\label{sec:other_problems}

The algorithm described in the previous section can be modified  to solve several  related  problems. Problems considered in this section are 
\textsc{Minimum Chain Completion}, \textsc{Chordal Graph Sandwich},
and \textsc{Triangulating Colored Graph}.

\medskip\noindent\textbf{Minimum Chain Completion.}
Bipartite graph $G=(V_1,V_2,E)$ is a chain graph if the neighbourhoods of the nodes in $V_1$ forms a chain, that is  
there is an ordering $v_1,v_2,\ldots, v_{|V_1|}$ of the vertices in $V_1$, such that 
$N(v_1)\subseteq N(v_2) \subseteq \ldots \subseteq N(v_{|V_1|})$.

\begin{center}
\fbox{\begin{minipage}{13cm}
\noindent {\sc Minimum Chain Completion }\\
{\sl Input:} A bipartite graph $G = (V_1,V_2,E)$ and  integer $k\geq 0$.\\
{\sl Question:} Is there $F \subseteq  V_1 \times V_2$, $|F|\leq k$, such that graph $H=(V_1,V_2,E \cup F)$ 
a chain graph? 

\end{minipage}}
\end{center}

 Yannakakis in his famous  NP-completeness proof of \textsc{Minimum Fill-in} ~\cite{Yannakakis81} used the following observation. 
Let $G$ be a bipartite graph with bipartitions $V_1$ and $V_2$, and let $G'$ be cobipartite (the complement of bipartite) graph formed by turning $V_1$ and $V_2$ into cliques. Then $G$ can be transformed into a chain graph by adding $k$ edges if and only if $G'$ can be triangulated with $k$ edges. By Theorem~\ref{thm:mainthm}, we have that
\textsc{Minimum Chain Completion}  is solvable in $\cO(2^{\cO(\sqrt{k}\log{k})} +k^2 nm)$ time.

%

\medskip\noindent\textbf{Chordal graph sandwich.}
In the chordal graph sandwich problem we are given two graphs $G_1 = (V,E_1)$ and $G_2 = (V,E_2)$ 
on the same vertex set $V$, and with  $E_1 \subset E_2$. 
The \textsc{Chordal Graph Sandwich} problem asks if there exists a chordal graph $H=(V,E_1 \cup F)$ 
sandwiched in between $G_1$ and $G_2$, that is $E_1 \cup F \subseteq E_2$.
\begin{center}
\fbox{\begin{minipage}{13cm}
\noindent {\sc Chordal Graph Sandwich}\\
{\sl Input:} Two graphs $G_1 = (V,E_1)$ and $G_2=(V,E_2)$ such that $E_1 \subset E_2$, and   $k = |E_2 \sm E_1|$.\\
{\sl Question:} Is there $F \subseteq E_2 \sm E_1$, $|F|\leq k$, such that graph $H=(V,E_1 \cup F)$ is a  triangulation of $G_1$? 
\end{minipage}}
\end{center}

Let us remark that  the \textsc{Chordal Graph Sandwich} problem is equivalent to asking if there is   a minimal triangulation of $G_1$ 
sandwiched between $G_1$ and $G_2$.

To solve
{\sc Chordal Graph Sandwich}
 we cannot use the algorithm from Section~\ref{thm:mainthm} directly.  
The reason  is that we are only allowed to add edges from  $E_2$ as fill edges. We need a kernelization algorithm for this problem as well. 
 
 \begin{lemma}\label{lem:kernel_sandwich}
 {\sc Chordal Graph Sandwich}  has a kernel with vertex set of size   $ \cO(k^3)$. The running time of kernelization algorithm is  $\cO(k^2 nm)$.
  \end{lemma}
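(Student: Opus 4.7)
The plan is to adapt the polynomial kernelization for \textsc{Minimum Fill-in} (Proposition~\ref{prop:polykernel}) to respect the sandwich constraint that fill edges must come from $E_2 \setminus E_1$. Let $V_F \subseteq V$ denote the set of vertices incident to at least one edge of $E_2 \setminus E_1$; since $|E_2 \setminus E_1| = k$, we have $|V_F| \leq 2k$. In any chordal sandwich $H$ of $(G_1, G_2)$, every vertex $v \in V \setminus V_F$ satisfies $N_H(v) = N_{G_1}(v)$, so $v$'s role in $H$ is entirely fixed by $G_1$; in particular, $v$ is simplicial in $H$ if and only if $N_{G_1}(v)$ is already a clique in $G_1$.

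I would first apply a \emph{simplicial reduction}: whenever $v$ satisfies that $N_{G_1}(v)$ is a clique in $G_1$, delete $v$ from both $G_1$ and $G_2$, decrementing $k$ by the number of removed edges in $E_2 \setminus E_1$. Soundness is standard: removing a simplicial vertex from a chordal graph preserves chordality, and any sandwich of the reduced instance extends by reinserting $v$ with its already-clique $G_1$-neighborhood. A secondary rule removes every vertex of $V \setminus V_F$ that lies on no induced $C_{\geq 4}$ of $G_1$, since such a vertex can introduce no chordless cycle into $H$. Both rules can be implemented within the claimed $\cO(k^2 nm)$ budget by reusing the data-structure techniques of \cite{NatanzonSS98,NatanzonSS00}.

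To obtain the $\cO(k^3)$ vertex bound after exhaustive reduction, I would charge each surviving vertex $v \in V \setminus V_F$ to a candidate chord $e \in E_2 \setminus E_1$ together with a canonical bounded-length witness path in $G_1$ through $v$ between the endpoints of $e$: such a path exists because $v$ lies on an induced cycle that $e$ must chord, and the two nonadjacent endpoints of that chord must belong to $V_F$. A Natanzon--Shamir--Sharan-style analysis should then bound the number of distinct witness paths charged to each candidate chord by $\cO(k^2)$, yielding $\cO(k \cdot k^2) = \cO(k^3)$ surviving non-$V_F$ vertices and hence an $\cO(k^3)$-vertex kernel overall. The main obstacle will be making this charging linear rather than multiply-counted: induced cycles overlap heavily, so the witness paths must be chosen canonically (for instance, as shortest chordless paths in $G_1$ between the endpoints of $e$), and one must show that if more than $\cO(k^3)$ vertices remain then some further reduction still applies—or else certify that the instance is trivially NO because the required chord edges outside $E_2$ are forbidden. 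I expect the detailed case analysis to mirror the one used by Natanzon, Shamir, and Sharan for standard \textsc{Minimum Fill-in}, carefully modified to forbid using non-edges of $G_1$ outside $E_2$ as chords.
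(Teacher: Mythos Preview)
Your reduction rules are essentially sound and in fact close to what the paper uses: your ``lies on no induced $C_{\geq 4}$'' rule is exactly the paper's No-cycle-vertex Rule. The genuine gap is the kernel-size argument. The charging scheme you sketch---assign each surviving $v\in V\setminus V_F$ to some chord $e\in E_2\setminus E_1$ via a witness path and hope for $\cO(k^2)$ paths per chord---does not follow from anything in Natanzon--Shamir--Sharan, and in fact fails as stated. Take nonadjacent $x,y$ with $xy\in E_2\setminus E_1$ and vertices $w_1,\dots,w_m$, each adjacent to both $x$ and $y$ and each lying on a chordless cycle through $x,w_i,y$ (with the rest of the cycle avoiding $N[w_i]$). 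Every $w_i$ survives both of your rules, every $w_i$ is naturally charged to the single chord $xy$, and $m$ is unbounded in terms of $k$. In ordinary \textsc{Minimum Fill-in} this situation is harmless because one may freely add $xy$ and then the $w_i$ become removable; in the sandwich setting you have written down no rule that commits $xy$ to the solution, so your reduced instance can have arbitrarily many vertices.

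The paper closes exactly this gap with a second rule you are missing, the Safe-edge Rule: for a nonadjacent pair $x,y$, if more than $2k$ vertices $w$ witness a chordless cycle through $x,w,y$, then every sandwich with at most $k$ fill edges must contain $xy$ (otherwise one needs a distinct fill edge incident to each such $w$). If $xy\notin E_2$ this certifies NO; otherwise add $xy$ and decrement $k$. With this rule the count becomes explicit and does not use your set $V_F$ at all: greedily pack chordless cycles and chordless paths (\`a la Kaplan--Shamir--Tarjan) to collect at most $4k$ ``hot'' vertices into a set $A$; then for each of the $\cO(k^2)$ nonadjacent pairs in $A$ at most $2k$ witness vertices remain after the Safe-edge Rule, giving $\cO(k^3)$ vertices total, and every other vertex is removable by the No-cycle-vertex Rule. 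Your proposal acknowledges that ``some further reduction'' may be needed when the charging overflows; the Safe-edge Rule is that reduction, and without it the argument does not close.
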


\begin{proof}
 To simplify notations, we denote an instance of the  \textsc{Chordal Graph Sandwich}  problem by
$(G,E',k)$, where 
$G_1 = G=(V,E)$, 
$E' \cap E = \emptyset$, 
$k \leq |E'|$,
and $G_2 = (V,E \cup E')$. 

We first define two reduction rules and prove their correctness. 

\medskip
\noindent\textbf{No-cycle-vertex Rule.}
\textsl{If  instance $(G,E',k)$  has  $u\in V$ such that   for each connected component $C$ of $G[V \sm N[u]]$,
$N(C)$ is a clique, then
replace instance $(G,E',k)$ with instance $(G \sm \{u\},E'',k')$, where 
  $E''\subseteq E'$ are the edges not incident to $u$ and  $k' = k- |E' \sm E''|$.
}

\begin{claim}\label{lem:sound_rr:no_cycle} 
No-cycle-vertex Rule  is sound, i.e. $(G \sm \{u\},E'',k')$ is 
an YES instance of the chordal sandwich problem if and only if $(G,E',k)$ is an YES instance.
\end{claim}
\begin{proof}
Let $G_u$ be the graph $G[V \sm \{u\}]$. 
Chordallity is a hereditary property so if $H = (V,E \cup F)$ is a triangulation of $G$ where $|F| \leq k$, 
then $H[V \sm \{u\}]$ is a triangulation of $G_u$ where the set of fill edges are the edges of $F$ not incident to $u$.

For the opposite direction assume that $H_u = (V(G_u),E(G_u) \cup F_u)$ is a minimal triangulation of $G_u$ where $|F_u| \leq k$.
Our objective now is to argue that $H = (V,E \cup F_u)$ is a triangulation of $G$.
Targeting towards a contradiction,  let us assume that $W=uaw_{1}w_{2}\ldots w_{\ell}bu$ is a chordless cycle of length at least four in $H$. Then  
notice that $u$ is a vertex of $W$ as $H_u = H[V \sm \{u\}]$ is a chordal graph by our assumption. 
Vertices $a$ and $b$ of $W$ are not adjacent by definition, and 
let vertices $w_{1}w_{2}\ldots w_{\ell}$ be contained in the connected component $C$ of $G[V \sm N[u]$. 
Nonadjacent vertices $a,b$ are now contained in $N(C)$ which is a contradiction to the condition of applying No-cycle-vertex Rule.
\end{proof}

For each pair of nonadjacent vertices  $x,y \in V$,  we define $A_{x,y}$ as the set of vertices such that $w \in A_{x,y}$ 
if $x,y \in N_G(w)$ and vertices $x,y$ are contained in the same connected component of $G[(V \sm N[w]) \cup \{x,y\}]$.

\medskip
\noindent\textbf{Safe-edge Rule.}
\textsl{If $2k < |A_{x,y}|$ for some pair of vertices $x,y$ in a problem instance $(G,E',k)$ then
\begin{itemize}
\item replace instance $(G,E',k)$ with a trivial NO instance if $xy \not\in E'$, and 
\item 
otherwise with instance $(G=(V,E \cup \{xy\}),E' \sm \{xy\},k-1)$.
\end{itemize}
}
\begin{claim}\label{lem:sound_rr:safe_edge} 
Safe-edge Rule is sound, i.e. the instance outputted by the rule  is a YES 
 instance of the chordal sandwich problem if and only if $(G,E',k)$ is an YES instance.
\end{claim}
\begin{proof}
By the definition of $A_{x,y}$, it is not hard to see that there exists an induced cycle of length at least four 
consisting of $x,w,y$ and a shortest induced path from $x$ to $y$ in $G[(V \sm N[w]) \cup \{x,y\}]$.
A trivial observation is that every triangulation of $G$ either has 
$xy$ as a fill edge, or there exists a fill edge incident to $w$. 
Since $2k < |A_{xy}|$, we have that every minimal triangulation not using $xy$ as a fill edge has at least 
one fill edge incident to each vertices in $A_{uv}$, and thus $(G,E',k)$ is a NO instance if $xy \not\in E'$ and 
$xy \in F$ for every edge set $F \subseteq E'$ such that $H=(V,E \cup F)$ is chordal. 
\end{proof}

It is possible to show that exhausting application of both No-cycle-vertex and Safe-edge rules 
either results in a polynomial kernel or a trivial recognizable NO instance. 
However, the running time of the algorithm would be $\cO(kn^2 m)$. 
In what follow, we show that with much more careful implementation of rules, 
it is possible to obtain  a kernel of size $\cO(k^3)$ in time $\cO(k^{2}nm)$. 
The algorithm uses the same approach as the kernel algorithms given 
in \cite{KaplanST99} and \cite{NatanzonSS00} for \textsc{Minimum Fill-in}.

Let $(G,E',k)$ be an instance of the problem. We give
an algorithm with running time $\cO(k^{2}nm)$ that outputs 
an instance $(G',E'',k')$ such that  $|V(G')| \leq 32k^3 + 4k$ and 
 $(G',E'',k')$ is a YES instance if and only if $(G,E',k)$ is a YES instance. 
 Let us remark that  we put no efforts to optimize the size of the kernel. 

  Let $A,B$ be a partitioning of the vertices of graph $G$ in the given instance $(G,E',k)$. Initially we put 
  $A = \emptyset$ and $B = V(G)$. 
   There is a sequence 
of procedures. To avoid a confusing nesting of if-then-else statements, we use the 
following convention: The first case which applies is used first in the procedure. 
Thus, inside a given procedure, the hypotheses of all previous procedures are assumed false.
  
\begin{enumerate}
 \item[P1:] If $4k < |A|$ then return a trivial NO instance, else if $G[B]$ contains a chordless cycle $W$ of length 
at least four, move $V(W)$ from $B$ to $A$.

 \item[P2:] If $4k < |A|$ then return a trivial NO instance, else if $G[B]$ contains a chordless path $P$ of at least two vertices 
which is also an induced subgraph of a chordless cycle $W$ of $G$ of length at least four, move $V(P)$ from $B$ to $A$.

 \item[P3:] Compute $A_{x,y}$ for each pair of vertices in $A$.

 \item[P4:] If $|A_{x,y}| \leq 2k$ then move $A_{x,y}$ from $B$ to $A$, else if $xy \not\in E'$ return a trivial NO instance, else 
add edge $xy$ to $F$.

 \item[P5:] Delete every vertex of $B$.

%

\end{enumerate}
Now we argue on the correctness and running time required to implement the procedures. 
P1: Every cordless cycle of length $4 \leq \ell$ requires at least $\ell -3$ fill edges to be triangulated~\cite{KaplanST99}. 
Thus, at least one fill edge has to be added for each $4$ vertices moved to $A$, and 
we have a NO instance if $4k < |A|$.
A chordless cycle of a non chordal graph can be obtained in $\cO(n+m)$ time~\cite{TarjanY84}. 
Total running time is $\cO(k(n+m))$. 

P2: Let $P$ be an induced path which is an induced subgraph of a chordless cycle $W$ in a graph $G$. 
Then any triangulation of $G$ will add at least $|V(P)|-1$ fill edges incident to vertices in $P$~\cite{KaplanST99}.
Thus, $2$ end points of a fill edge(equivalently to one fill edge) has to be added for each $4$ vertices moved to $A$, and 
we have a NO instance if $4k < |A|$.
A path $P$ can be obtained as follows: Let $u$ be an end vertex of $P$ and let $x$ be the unique neighbour of $u$ in $A$ 
on the cycle $W$. We have a chordless cycle $W$ satisfying the conditions if there is a path from $(N(u) \cap B) \sm N(x)$ 
to a vertex in $N(x) \sm N(u)$ not using vertices of $N(x) \cap N(u)$. Such a path can trivially be found in $\cO(m)$ time. 
Thus the time required to this step is $\cO(knm)$. 

P3: Let $x,y$ be nonadjacent vertices of $A$. For each vertex $w \in N(x) \cap N(y) \cap B$ check if there is a path from 
$x$ to $y$ avoiding $N[w]$, if so there is a chordless cycle containing $xwy$ as consecutive vertices. 
Running time is $\cO(k^2nm)$  because  $|A|$ is $O(k)$.

P4: If $|A_{x,y}| > 2k$ then by the Safe-edge Rule, it is safe to add edge $xy$ if $xy \in E'$, and to 
return a trivial NO instance if  $xy \not\in E'$.
Running time for this step is $\cO(k^2n)$. 

P5: For every remaining vertex $u \in B$, No-cycle-vertex Rule can be applied, and vertex $u$ can safely be 
deleted from the graph $M = (V,E \cup F)$. 
Let us on the contrary assume that this is not the case. 
Then one can find   two nonadjacent vertices $a,b \in N_M(u)$ and 
 connected component $C$ of $M \sm N_M[w]$ such that $a,b \in N_M(C)$. 
Let $W$ be the chordless cycle of length at least four, obtained by $yux$ 
and a shortest chordless path from $x$ to $y$ in $M[C \cup \{x,y\}]$.
By P1, at least one vertex of $W$ is contained in $A$ and 
by P2, no two consecutive vertices of $W$ are contained in $B$. 
By our assumption $u \in B$, so $x,y \in A$ by P2.
Vertex $u$ is contained in $N(x) \cap N(y)$ because during  P3 we add  edges only between vertices in $A$. 
Furthermore, $u \in A_{x,y}$ because otherwise by \cite[Theorem 2.10]{KaplanST99}, $u$ would already have been moved to $A$ by P2.
 Since $xy \not \in F$ and $u \in A_{x,y} \cap B$,   we have a contradiction to P4.  

Let $G' = (A, E(G[A] \cup F)$ and let $E''$ be the edges in $E' \sm F$ where both endpoints are incident to vertices of $A$.
 Since $(G', E'', k - |F|)$ is obtained 
by applying   Safe-edge Rule on each edge in $F$ and  No-cycle-vertex Rule on each vertex in $B$, we have that instance $(G', E'', k - |F|)$ is   a YES instance if and only if 
$(G,E',k)$ is a YES instance,
Finally, $|A| \leq 32k^3 + 4k$ because during P1 and P2 we move in total at most $4k$ vertices from $B$ to $A$, 
and during  P3  at most $2k \cdot (4k)^2$ vertices to $A$.

\end{proof}

 \begin{theorem}\label{thm:chordsand}  \textsc{Chordal Sandwich Problem}  is solvable in time 
$\cO(2^{\cO(\sqrt{k}\log{k})} +k^2 nm)$.
\end{theorem}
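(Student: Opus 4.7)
The plan is to follow the four-step strategy of Theorem~\ref{thm:mainthm}, adapting each step to enforce that every fill edge added lies in the allowed set $E_2 \sm E_1$.

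Step~A is handled directly by the polynomial kernel of Lemma~\ref{lem:kernel_sandwich}: I reduce the input $(G_1, G_2, k)$ in time $\cO(k^2 nm)$ to an equivalent instance $(G_0, E_0', k_0)$ with $|V(G_0)| = \cO(k^3)$ and $k_0 \leq k$, and all subsequent steps operate on this kernel. For Step~B1, I reuse Rule~\ref{br:rule1} essentially unchanged, with one modification: any branch whose required fill edges are not all contained in $E_0'$ is immediately discarded as infeasible. The soundness argument of Lemma~\ref{lem:sound_branching} transfers verbatim, because if $H = (V, E(G_0) \cup F)$ is a sandwich triangulation with $F \subseteq E_0'$ and $\pi$ is a corresponding minimal elimination ordering, then the proof uses only Proposition~\ref{pr:orderFill} to select a surviving branch, and Proposition~\ref{pr:orderFill} forces that branch's required edges to lie in $F$, hence in $E_0'$. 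Since pruning branches can only decrease their number, Lemma~\ref{lem:amountof_reduced} yields at most $|V(G_0)|^{\cO(\sqrt{k})} = 2^{\cO(\sqrt{k}\log k)}$ non-reducible instances, enumerable within the same time bound.

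In Step~B2, for each non-reducible instance I invoke Lemma~\ref{lem:essent_non_red} to list all vital potential maximal cliques and then retain only those $\Omega$ whose internal non-edges lie entirely in $E_0'$, since a PMC failing this filter cannot appear as a maximal clique of any sandwich triangulation. In Step~C, I run the dynamic program of Lemma~\ref{lem:pmc_fill} on the filtered PMCs, equivalently setting $\textbf{fill}_G(\Omega) = +\infty$ in \eqref{eq:root} and \eqref{eq:recursion} whenever $\Omega$ fails the filter. Because every maximal clique of every sandwich triangulation is automatically a vital PMC that passes the filter, the recursion still finds an optimum, and the per-instance running time remains $2^{\cO(\sqrt{k}\log k)}$. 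Composing the four steps gives the claimed bound $\cO(2^{\cO(\sqrt{k}\log k)} + k^2 nm)$.

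The main obstacle I expect is a careful reverification of Lemma~\ref{lem:sound_branching} in the sandwich setting: one must check that whenever the elimination-ordering argument of that lemma forces a specific fill edge $uv$ or star $\{w_i x : x \in X\}$ to appear, each such edge is automatically a fill edge of the triangulation $H$ under consideration and therefore lies in $E_0'$, so that at least one feasible branch survives along every YES path. Once this is confirmed, Steps~B2 and~C follow easily, because the structural results on potential maximal cliques in Propositions~\ref{char:pmc} and \ref{pr:pmc_sep} are purely combinatorial and indifferent to which edges are allowed as fill.
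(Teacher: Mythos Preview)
Your proposal is correct and follows essentially the same four-step approach as the paper's proof: kernelize via Lemma~\ref{lem:kernel_sandwich}, branch with Rule~\ref{br:rule1} while discarding branches whose forced fill edges leave $E_0'$, enumerate vital potential maximal cliques and keep only those that are cliques in $G_2$, then run the dynamic program of Lemma~\ref{lem:pmc_fill} on the surviving cliques. Your explicit reverification of Lemma~\ref{lem:sound_branching} via Proposition~\ref{pr:orderFill} (that the forced edge $uv$ or the star at $w_i$ must already lie in $F \subseteq E_0'$) is exactly the point the paper leaves implicit, and it is correct.
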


\begin{proof} Let   $(G_1 , G_2 ,k)$ be an instance of the problem. We sketch the proof by following the steps of the proof of Theorem~\ref{thm:mainthm} and commenting on the differences. 
 \emph{Step A:} We use Lemma~\ref{lem:kernel_sandwich}, to obtain   in time $\cO(k^2 nm)$ kernel   $(G'_1 , G'_2 ,k')$ such that   $|V(G'_1)|= \cO(k^3 )$ and $k'\leq k$.
 \emph{ Step B1:}  On kernel we use Branching Rule~\ref{br:rule1} exhaustively, with the adaptation that every instance defined by fill edge set 
$F_i$ where $F_i \not\subseteq E_2$ is  discarded. Thus we obtain  $2^{\cO(\sqrt{k}\log{k})} $ non-reducible instances.
  \emph{Step B2:} For each non-reducible instance $(G^i_1, G^i_2,k_i)$,  we enumerate vital  potential maximal cliques of $G_1^i$  but discard all potential maximal cliques that are not cliques 
in $G^i_2$. 
  \emph{Step C:} Solve the remaining problem in time proportional to the number of vital  potential maximal cliques in $G_1'$  that are also cliques in $G_2^i$. This step is almost identical to Step~C of Theorem~\ref{thm:mainthm}.  
\end{proof}
 
 \noindent\textbf{Triangulating Colored Graph.}
In the {\sc Triangulating Colored Graph} problem we are given a graph $G=(V,E)$ with a partitioning of $V$ into sets 
$V_1,V_2,\ldots,V_c$, a colouring of the vertices. Let us remark that this coloring is not necessarily a proper coloring of $G$. 
The question is if $G$ can be triangulated without adding  edges between vertices in the same set (colour). 
\begin{center}
\fbox{\begin{minipage}{13cm}
\noindent {\sc Triangulating Colored Graph}\\
{\sl Input:} A graph $G = (V,E)$, a partitioning of $V$ into sets $V_1,V_2,\ldots,V_c$, and an integer $k$.\\
{\sl Question:} Is there $F \subseteq [V]^2$, $|F|\leq k$, and such that for each $uv \in F$, $u,v \not \in V_i$, $1\leq i\leq c$, and  graph $H=(V,E_1 \cup F)$ is a triangulation of $G$? 
\end{minipage}}
\end{center}

{\sc Triangulating Colored Graph} 
 can be trivially reduced to   \textsc{Chordal Graph Sandwich}   by defining $G_1=G$, and 
 the edge set of graph $G_2$ as the edge set  of $G_1$ plus the set of  all vertex pairs of different colors.  
 Thus by Theorem~\ref{thm:chordsand},  {\sc Triangulating Colored Graph} is solvable in time $\cO(2^{\cO(\sqrt{k}\log{k})} +k^2 nm)$.
 
%

%



\section{Conclusions and open problems}\label{sec:conclusion}
In this paper we gave the first parameterized subexponential time algorithm solving  \textsc{Minimum Fill-in} in time $\cO(2^{\cO(\sqrt{k}\log{k})} +k^2nm)$. It would be interesting to find out how tight is the  exponential dependence, up to some complexity assumption, in the running time of our algorithm.   
We would be    surprised  to hear about  time $2^{o(\sqrt{k})}n^{\cO(1)}$ algorithm solving  \textsc{Minimum Fill-in}. For example, such an algorithm would be able to solve the problem in time $2^{o(n)}$. However, the only results we are aware in this direction is that  \textsc{Minimum Fill-in} cannot be solved in time $2^{o(k^{1/6})}n^{\cO(1)}$ unless the ETH fails \cite{Cyganprivate}. See 
\cite{ImpagliazzoPZ01} for more information on ETH.  Similar uncertainty  occurs with a number of
other graph problems expressible in terms of vertex orderings. Is it possible to prove that unless ETH fails, there are no $2^{o(n)} $ algorithms for \textsc{Treewidth}, \textsc{Minimum Interval Completion}, and  
 \textsc{Optimum Linear Arrangement}? Here the big gap between what we suspect   and what we know is frustrating. 
 
 On the other hand, for the \textsc{Triangulated Colored Graph}  problem,  which we  are able to solve in time $\cO(2^{\cO(\sqrt{k}\log{k})} +k^2nm)$, Bodlaender et al.  \cite{BodlaenderFW92} gave  a polynomial time reduction  that from a 
 $3$-SAT formula on $p$ variables and $q$ clauses constructs an instance of  
\textsc{Triangulated Colored Graph}. This instance has $2+2p+6q$ vertices and  a triangulation of the instance respecting its
colouring can be obtained by   adding of at most $(p+ 3q) + (p+3q)^2 + 3pq$ edges.  Thus up to  ETH, 
\textsc{Triangulated Colored Graph}  and  {\sc Chordal Graph Sandwich}  cannot be solved in time $2^{o(\sqrt{k})}n^{\cO(1)}$.

The possibility of improving the $nm$ factor in the  running time $\cO(2^{\cO(\sqrt{k}\log{k})} +k^2nm)$ of the algorithm is another interesting open question. The factor  $nm$ appears from the running time required by the kernelization algorithm to identify simplicial vertices. Identification of simplicial vertices can  be done in time 
$\cO(\min\{mn,n^\omega \log{n}\})$, where $\omega < 2.37$ is the exponent of matrix multiplication  \cite{HeggernesTV05,KratschS06}. 
   Is  the running time required to obtain a polynomial kernel for \textsc{Minimum Fill-in} is at least the time required to identify a simplicial vertex in a graph and can search of  a simplicial vertex be done  faster than finding a triangle in a graph?

Finally, there are various  problems in graph algorithms, where the task is to find a  minimum number of edges or vertices to be changed such that the resulting graph belongs to  some graph class.   For example, the problems of completion to interval and proper interval graphs are  fixed parameter tractable  \cite{HeggernesPTV07,focs/KaplanST94,KaplanST99,Villanger:2009ez}. Can these problems be solved by subexponential parameterized algorithms? Are there any generic arguments explaining why some FPT graph modification problems can be solved in subexponential time and some can't? 
%
 
\medskip
\noindent
\textbf{Acknowledgement}  We are grateful to Marek Cygan and Saket Saurabh for discussions and useful suggestions.

\begin{small}

\end{small}

%
%
%
%
%
%
%
%

\end{document}